\newtheorem{theorem}{\indent Theorem}
\newtheorem{lemma}[theorem]{\indent Lemma}
\newtheorem{corollary}[theorem]{\indent Corollary}
\newtheorem{remark}{\indent Remark}
\title{\LARGE \bf
Sliding Mode Control of Two-Level Quantum Systems }
\author{Daoyi~Dong and Ian R. Petersen
\thanks{The material in this paper was partially presented at the 2010 American Control Conference
at Baltimore, Maryland, USA. This work was supported by the Australian Research Council and in part by the National Natural Science Foundation of China under Grant 60703083 and Grant 60805029.}
\thanks{D. Dong is with the School of
Engineering and Information Technology, University of New South
Wales at the Australian Defence Force Academy, Canberra, ACT 2600,
Australia and with the Institute of Cyber Systems and Control, State Key Laboratory
of Industrial Control Technology, Zhejiang University, Hangzhou 310027,
China
        {\tt\small daoyidong@gmail.com}}%
\thanks{I. R.
Petersen is with the School of Engineering and Information
Technology, University of New South Wales at the Australian Defence
Force Academy, Canberra, ACT 2600, Australia
        {\tt\small i.r.petersen@gmail.com}}%
}
\begin{document}

\maketitle
\thispagestyle{empty}
\pagestyle{empty}

\begin{abstract}

This paper proposes a robust control method based on sliding mode
design for two-level quantum systems with bounded uncertainties. An
eigenstate of the two-level quantum system is identified as a
sliding mode. The objective is to design a control law to steer the
system's state into the sliding mode domain and then maintain it in
that domain when bounded uncertainties exist in the system
Hamiltonian. We propose a controller design method using the
Lyapunov methodology and periodic projective measurements. In
particular, we give conditions for designing such a control law,
which can guarantee the desired robustness in the presence of the
uncertainties. The sliding mode control method has potential
applications to quantum information processing with uncertainties.

\end{abstract}

\begin{keywords}
quantum control, sliding mode control, bounded uncertainty, periodic
projective measurement, Lyapunov methodology.
\end{keywords}

\section{INTRODUCTION}\label{Sec1}

The manipulation and control of quantum systems is becoming an
important task in many fields \cite{Dong and Petersen
2010IET}-\cite{Rabitz 2009}, such as atomic physics \cite{Chu 2002},
molecular chemistry \cite{Rabitz et al 2000}
and quantum information \cite{Nielsen and Chuang 2000}. It is
desirable to develop quantum control theory in a systematic way in
order to adapt it to the development of quantum technology
\cite{Dowling and Milburn 2003}. Several useful tools from classical
control theory have been introduced to the control analysis and
design of quantum systems. For example, optimal control theory has
been used to assist in control design for closed and dissipative
quantum systems \cite{Khaneja et al 2001}-\cite{Boscain and Mason
2006}. A learning control method has been presented for guiding the
control of chemical reactions \cite{Rabitz et al 2000}, \cite{Judson
and Rabitz 1992}. Quantum feedback control approaches including
measurement-based feedback and coherent feedback have been used to
improve performance for several classes of tasks such as preparing
quantum states, quantum error correction, controlling quantum
entanglement \cite{Wiseman and Milburn 1993}-\cite{Nurdin et al 2009
SIAM}. Robust control tools have been introduced to enhance the
robustness of quantum feedback networks and linear quantum
stochastic systems \cite{D'Helon and James 2006}, \cite{James et al
2007}.

Although some progress has been made, more research effort is
necessary in controlling quantum phenomena. In particular, the
robustness of quantum control systems has been recognized as a key
issue in developing practical quantum technology \cite{Pravia et al
2003}-\cite{Viola and Knill 2003}. In this paper, we focus on the
robustness problem for quantum control systems. In \cite{James et al
2007}, James and co-workers have formulated and solved a quantum
robust control problem using the $H^{\infty}$ method for linear
quantum stochastic systems. Here, we develop a variable structure
control approach with sliding modes to enhance the robustness of
quantum systems. The variable structure control strategy is a widely
used design method in classical control theory and industrial
applications where one can change the controller structure according
to a specified switching logic in order to obtain desired
closed-loop properties \cite{Utkin 1977}, \cite{Decarlo et al 1988}.
In \cite{Dong and Petersen 2008} and \cite{Dong and Petersen 2010
IJC}, Dong and Petersen have formulated and solved a variable
structure control problem for the control of quantum systems.
However, the results in \cite{Dong and Petersen 2008} only involve
open-loop control design using an idea of changing controller
structures and do not consider the robustness which can be obtained
through sliding mode control. Ref. \cite{Dong and Petersen 2008} and
Ref. \cite{Vilela Mendes and Man'ko 2003} have briefly discussed the
possible application of sliding mode control to quantum systems. In
\cite{Dong and Petersen 2009NJP}, two approaches based on sliding
mode design have been proposed for the control of quantum systems
and potential applications of sliding mode control to quantum
information processing have been presented. Following these results,
this paper formally presents a sliding mode control method for
two-level quantum systems to deal with bounded uncertainties in the
system Hamiltonian \cite{Dong and Petersen 2010ACC}. In particular,
we propose two approaches of designing the measurement period for
different situations which are dependent on the bound on the
uncertainties and the allowed probability of failure.

Variable structure control design with sliding modes generally
includes two main steps: selecting a sliding surface (sliding mode)
and controlling the system to and maintaining it in this sliding
surface. Being in the sliding surface guarantees that the quantum
system has the desired dynamics. We will select an eigenstate of the
free Hamiltonian of the controlled quantum system as a sliding mode.
In the second step, direct feedback control is not directly
applicable since we generally cannot acquire state information
without destroying the quantum system's state. Hence, we propose a
new method to accomplish this task, which is based on the Lyapunov
methodology and periodic projective measurements. The Lyapunov
methodology is a powerful tool for designing control laws in
classical control theory and has also been applied to quantum
control problems \cite{Mirrahimi et al 2005}-\cite{Kuang and Cong
2008}. Most existing results on Lyapunov control of quantum systems
focus on designing a control law to ensure that the controlled
quantum system's state asymptotically converges to the target state.
The existing Lyapunov design methods in quantum control rely on
perfect knowledge of the initial quantum states and system
Hamiltonian. In our approach, once the Lyapunov control steers the
quantum system into a sliding mode domain, we make a projective
measurement on the system. Hence, the Lyapunov design method can
tolerate small drifts (uncertainties) when carrying out our control
tasks, which will be demonstrated by simulation in Section II.C.
Periodic projective measurements are employed to maintain the
system's state in the sliding mode domain when uncertainties exist
in the system Hamiltonian. If the measurement period is small enough
and the initial state is an eigenstate, the frequent measurements
make the system collapse back to the initial state. This is related
to the quantum Zeno effect (for details, see \cite{Misra and
Sudarshan}, \cite{Itano et al 1990} and \cite{Facchi and Pascazio
2002}). In contrast to the quantum Zeno effect, our objective is to
design a measurement period which is as large as possible. The
framework of the proposed method involves unitary control (Lyapunov
control) and projective measurement. In this sense, it is similar to
the discrete-time quantum feedback stabilization problem in
\cite{Ticozzi and Bolognani 2010 MTNS} and \cite{Ticozzi and
Bolognani 2010}. However, these papers do not consider possible
uncertainties in the system Hamiltonian and use generalized
measurements rather than periodic projective measurements. The main
feature of the proposed method is that the control law can guarantee
control performance when bounded uncertainties exist in the system
Hamiltonian.

This paper is organized as follows. Section \ref{Sec2} introduces a
quantum control model, defines the sliding mode and formulates the
control problem. In Section \ref{Sec3}, we present a sliding mode
control method based on the Lyapunov methodology and periodic
projective measurements for two-level quantum systems with bounded
uncertainties. Using the known information about uncertainties
(e.g., the uncertainty bound and type of uncertainties), we propose
two approaches (i.e., Eqs. (\ref{1period}) and (\ref{2period})) for
designing the measurement period to guarantee the control
performance. An illustrative example is presented to demonstrate the
proposed method. The detailed proofs of the main theorems are
presented in Section \ref{Sec4}. Conclusions are given in Section
\ref{Sec5}.

\section{Sliding modes and problem formulation}\label{Sec2}
In this section, we first introduce a two-level quantum control
model. Then a sliding mode is defined using an eigenstate. Finally
the control problem considered in this paper is formulated.

\subsection{Quantum Control Model}
In this paper, we focus on two-level pure-state quantum systems. The
quantum state can be represented by a two-dimensional unit vector
$|\psi\rangle$ in a Hilbert space $\mathcal{H}$. Since the global
phase of a quantum state has no observable physical effect, we do
not consider the effect of global phase. If we denote the Pauli
matrices $\sigma=(\sigma_{x},\sigma_{y},\sigma_{z})$ as follows:
\begin{equation}
\sigma_{x}=\begin{pmatrix}
  0 & 1  \\
  1 & 0  \\
\end{pmatrix} , \ \ \ \
\sigma_{y}=\begin{pmatrix}
  0 & -i  \\
  i & 0  \\
\end{pmatrix} , \ \ \ \
\sigma_{z}=\begin{pmatrix}
  1 & 0  \\
  0 & -1  \\
\end{pmatrix} ,
\end{equation}
we may select the free Hamiltonian of the two-level quantum system
as $H_{0}=I_{z}=\frac{1}{2}\sigma_{z}$. Its two eigenstates are
denoted as $|0\rangle$ and $|1\rangle$. To control a quantum system,
we introduce the following control Hamiltonian
$H_{u}=\sum_{k}u_{k}(t)H_{k}$, where $u_{k}(t)\in \mathbf{R}$ and
$\{H_{k}\}$ is a set of time-independent Hamiltonians. For
simplicity, the control Hamiltonian for two-level systems can be
written as $H_{u}=u_{x}(t)I_{x}+u_{y}(t)I_{y}+u_{z}(t)I_{z}$, where
\begin{equation}
I_{x}=\frac{1}{2}\sigma_{x}=\frac{1}{2}\begin{pmatrix}
  0 & 1  \\
  1 & 0  \\
\end{pmatrix} , \ \ \
I_{y}=\frac{1}{2}\sigma_{y}=\frac{1}{2}\begin{pmatrix}
  0 & -i  \\
  i & 0  \\
\end{pmatrix} .
\end{equation}

The controlled dynamical equation can be described as (we have
assumed $\hbar=1$ by using atomic units in this paper)
\begin{equation}\label{controlled model}
\begin{array}{l} i|\dot{\psi}(t)\rangle
=H_{0}|\psi(t)\rangle+\sum_{k=x,y,z}u_{k}(t)I_{k}|\psi(t)\rangle, \\
|\psi(t=0)\rangle=|\psi_{0}\rangle .
\end{array}
\end{equation}
This control problem is converted into the following problem: given
an initial state and a target state, find a set of controls
$\{u_{k}(t)\}$ in (\ref{controlled model})
to drive the controlled system from the initial state to the target
state.

In practical applications, we often use the density operator (or
density matrix) $\rho$ to describe the quantum state of a quantum
system. For a pure state $|\psi\rangle$, the corresponding density
operator is $\rho\equiv |\psi\rangle \langle \psi|$. For a two-level
quantum system, the state $\rho$ can be represented in terms of the
Bloch vector
$\mathbf{r}=(x,y,z)=(\text{tr}\{\rho\sigma_{x}\},\text{tr}\{\rho\sigma_{y}\},\text{tr}\{\rho\sigma_{z}\})$:
\begin{equation}\label{eq4}
\rho=\frac{1}{2}(I+\mathbf{r}\cdot \sigma) .
\end{equation}
The evolution equation of $\rho$ can be written as
\begin{equation}
\dot\rho=-i[H, \rho]
\end{equation}
where $[A, B]=AB-BA$ and $H$ is the total system Hamiltonian.

After we represent the state $\rho$ with the Bloch vector, the pure
states of a two-level quantum system correspond to the surface of
the Bloch sphere, where $(x, y, z)=(\sin\theta \cos\varphi,
\sin\theta \sin\varphi, \cos\theta)$, $\theta \in [0,\pi]$, $\varphi
\in [0, 2\pi]$. An arbitrary pure state $|\psi\rangle$ for a
two-level quantum system can be represented as
\begin{equation}\label{superposition}
|\psi\rangle=\cos{\frac{\theta}{2}}|0\rangle+e^{i\varphi}\sin{\frac{\theta}{2}}|1\rangle
.
\end{equation}

\subsection{Sliding Modes}
Sliding modes play an important role in variable structure control
\cite{Utkin 1977}. Usually, the sliding mode is constructed so that
the system has desired dynamics in the sliding surface. For a
quantum control problem, a sliding mode may be represented as a
functional of the state $|\psi\rangle$ and the Hamiltonian $H$;
i.e., $S(|\psi\rangle, H)=0$.
For example, an eigenstate $|\phi_{j}\rangle$ of the free
Hamiltonian $H_{0}$ (i.e.,
$H_{0}|\phi_{j}\rangle=\lambda_{j}|\phi_{j}\rangle$ where
$\lambda_{j}$ is one eigenvalue of $H_{0}$) can be selected as a
sliding mode. We can define $S(|\psi\rangle, H)=1-|\langle
\psi|\phi_{j}\rangle|^{2}=0$. If the initial state
$|\psi_{0}\rangle$ is in the sliding mode; i.e.,
$S(|\psi_{0}\rangle, H)=1-|\langle \psi_{0}|\phi_{j}\rangle|^{2}=0$,
we can easily prove that the quantum system will maintain its state
in this surface under only the action of the free Hamiltonian
$H_{0}$. In fact, $|\psi(t)\rangle=e^{-iH_{0}t}|\psi_{0}\rangle$,
and we have
\begin{equation}
\begin{array}{ll}
S(|\psi(t),H)&=1-|\langle \psi(t)|\phi_{j}\rangle|^{2}=1-|\langle \psi_{0}|e^{iH_{0}t}|\phi_{j}\rangle|^{2}\\
&=1-|\langle \psi_{0}|\phi_{j}\rangle
e^{i\lambda_{j}t}|^{2}=1-|\langle
\psi_{0}|\phi_{j}\rangle|^{2} |e^{i\lambda_{j}t}|^{2}\\
&=0 .\\
\end{array}
\end{equation}
That is, an eigenstate of $H_{0}$ can be identified as a sliding
mode. For two-level quantum systems, we may select either
$|0\rangle$ or $|1\rangle$ as a sliding mode. Without loss of
generality, we identify the eigenstate $|0\rangle$ of a two-level
quantum system as the sliding mode in this paper.

\subsection{Problem Formulation}
In Section II.B, we have identified an eigenstate $|0\rangle$ as a
sliding mode. This means that if a quantum system is driven into the
sliding mode, its state will be maintained in the sliding surface
under the action of the free Hamiltonian. However, in practical
applications, it is inevitable that there exist noises and
uncertainties. In this paper, we suppose that the uncertainties can
be approximately described as perturbations in the Hamiltonian. That
is, the uncertainties can be denoted as
$H_{\Delta}=\epsilon_{x}(t)I_{x}+\epsilon_{y}(t)I_{y}+\epsilon_{z}(t)I_{z}$.
The unitary errors in \cite{Pravia et al 2003} belong to this class
of uncertainties and uncertainties in one-qubit (one quantum bit)
gate also correspond to this class of uncertainties \cite{Dong and
Petersen 2009NJP}. For a spin system in solid-state nuclear magnetic
resonance (NMR), external noisy magnetic fields and unwanted
coupling with other spins may lead to uncertainties in this class.
Further, we assume the uncertainties are bounded; i.e.,
\begin{equation}
\sqrt{\epsilon_{x}^{2}(t)+\epsilon_{y}^{2}(t)+\epsilon_{z}^{2}(t)}\leq
\epsilon\ \ (\epsilon \geq 0).
\end{equation}
When $\epsilon=0$, $H_{\Delta}=0$. That is, there exist no
uncertainties, which is trivial for our problem. Hence, in the
following we assume $\epsilon
>0$. An important advantage of classical sliding mode control is its
robustness. Our main motivation for introducing sliding mode control
to quantum systems is to deal with uncertainties. We further suppose
that the corresponding system without uncertainties is completely
controllable and arbitrary unitary control operations can be
generated. This assumption can be guaranteed for a two-level quantum
system if we can realize arbitrary rotations along the $z$-axis and
$\zeta$-axis ($\zeta=x\ \text{or}\ y$) (e.g., see \cite{D'Alessandro
2007} for details).

The control problem under consideration is stated as follows: design
a control law to drive and then maintain the quantum system's state
in a sliding mode domain even when bounded uncertainties exist in
the system Hamiltonian. Here a sliding mode domain may be defined as
$\mathcal{D}=\{|\psi\rangle :  |\langle 0|\psi\rangle|^{2}\geq
1-p_{0}, 0< p_{0}< 1\}$, where $p_{0}$ is a given constant. Here we
assume $p_{0}\neq 0, 1$ since the case $p_{0}=0$ only occurs in the
sliding mode surface and the case $p_{0}=1$ is always true. Hence,
the two cases with $p_{0}=0$ and $p_{0}=1$ are trivial for our
problem. The definition of the sliding mode domain implies that the
system has a probability of at most $p_{0}$ (which we call the
probability of failure) to collapse out of $\mathcal{D}$ when making
a measurement. This behavior is quite different from that which
occurs in traditional sliding mode control. Hence, we expect that
our control laws will guarantee that the system's state remains in
$\mathcal{D}$ except that a measurement operation may take it away
from $\mathcal{D}$ with a small probability (not greater than
$p_{0}$). The control problem considered in this paper includes
three main subtasks: (i) for any initial state (assumed to be
known), design a control law to drive the system's state into a
defined sliding mode domain $\mathcal{D}$; (ii) design a control law
to maintain the system's state in $\mathcal{D}$; (iii) design a
control law to drive the system's state back to $\mathcal{D}$ if a
measurement operation takes it away from $\mathcal{D}$. For
simplicity, we suppose that there exist no uncertainties during the
control processes (i) and (iii).

\section{Sliding mode control based on Lyapunov methods and projective measurements}\label{Sec3}

\subsection{General Method}
The first task is to design a control law to drive the controlled
system to the chosen sliding mode domain $\mathcal{D}$.
Lyapunov-based methods are widely used to accomplish this task in
traditional sliding mode control. If the gradient of a Lyapunov
function is negative in the neighborhood of the sliding surface,
then the controlled system's state will be attracted to and
maintained in $\mathcal{D}$. The Lyapunov methodology has also been
used to design control laws for quantum systems \cite{Mirrahimi et
al 2005}-\cite{Kuang and Cong 2008}. However, these existing results
do not consider the issue of robustness against uncertainties. Since
the measurement of a quantum system will inevitably destroy the
measured state, most existing results on Lyapunov-based control for
quantum systems in fact use a feedback design to construct an
open-loop control. That is, Lyapunov-based control can be used to
first design a feedback law which is then used to find the open-loop
control by simulating the closed-loop system. Then the control can
be applied to the quantum system in an open-loop way. Hence, the
traditional sliding mode control methods using Lyapunov control
cannot be directly applied to our problem.

Although quantum measurement often has deleterious effects in
quantum control tasks, recent results have shown that it can be
combined with unitary transformations to complete some quantum
manipulation tasks and enhance the capability of quantum control
\cite{Vilela Mendes and Man'ko 2003}, \cite{Pechen et al
2006}-\cite{Romano and D'Alessandro 2006}. For example, Vilela
Mendes and Man'ko \cite{Vilela Mendes and Man'ko 2003} showed
nonunitarily controllable systems can be made controllable by using
``measurement plus evolution". Quantum measurement can be used as a
control tool as well as a method of information acquisition. It is
worth mentioning that the effect of measurement on a quantum system
as a control tool can be achieved through the interaction between
the system and measurement apparatus. In this paper, we will combine
the Lyapunov methodology and projective measurements (with the
measurement operator $\sigma_{z}$) to accomplish the sliding mode
control task for two-level quantum systems. The projective measurement with $\sigma_{z}$ on a two-level
system makes the system's state collapse into $|0\rangle$ (corresponding to eigenvalue $1$ of $\sigma_{z}$) or $|1\rangle$ (corresponding to eigenvalue $-1$ of $\sigma_{z}$).

The steps in the control algorithm are as follows (see Fig. 1):
\begin{enumerate}
    \item
    Select an eigenstate $|0\rangle$ of $H_{0}$ as a sliding mode $S(|\psi\rangle,
    H)=0$, and define the sliding mode domain as $\mathcal{D}=\{|\psi\rangle
    :  |\langle 0|\psi\rangle|^{2}\geq 1-p_{0}\}$.
    \item For a known initial state $|\psi_{0}\rangle$, construct a Lyapunov function $V(|\psi_{0}\rangle,
S)$ to find the control law that can drive $|\psi_{0}\rangle$ into
the sliding mode $S$.
    \item For a specified probability of failure $p_{0}$ and $V(|\psi_{0}\rangle,
S)$, construct the control period $T_{0}$ so that the control law
can drive the system's state into $\mathcal{D}$ in a time period
$T_{0}$.
    \item For an initial condition which is another eigenstate
$|1\rangle$, design a Lyapunov function $V(|1\rangle, S)$ and
construct the period $T_{1}$ by using a similar method to that in
3).
    \item According to $p_{0}$ and $\epsilon$, design the period $T$ for periodic projective
    measurements.
    \item Use the designed control law to drive the system's state into $\mathcal{D}$ in
    $T_{0}$ and make a projective measurement at $t=T_{0}$.
    Then repeat the following operations: make periodic projective
    measurements with the period $T$ to maintain the system's state in
    $\mathcal{D}$;
    if the measurement result corresponds to $|1\rangle$, we use the
    corresponding control law to drive the state back into $\mathcal{D}$.
\end{enumerate}

\begin{figure}
\centering
\includegraphics[width=4.6in]{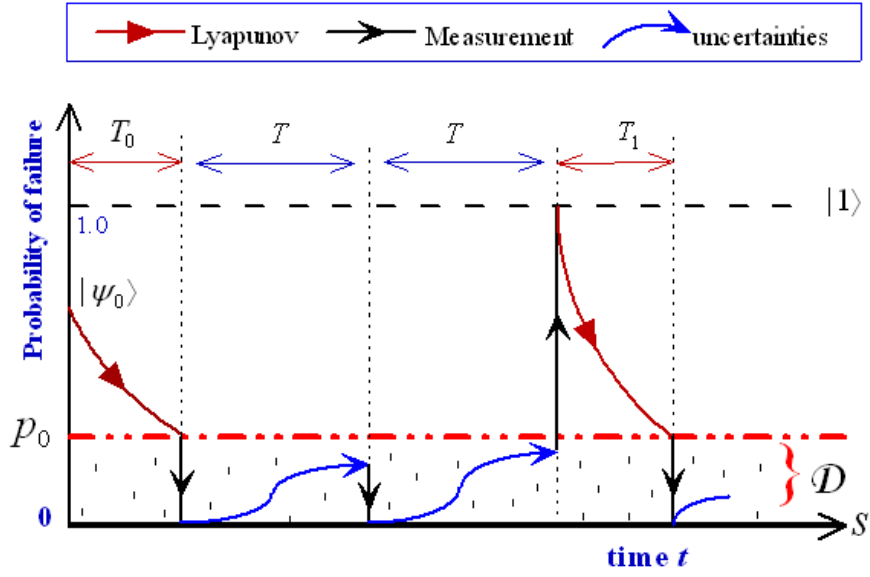}
\caption{The sliding mode control scheme for a two-level quantum
system based on Lyapunov methods and periodic projective
measurements. In this figure, ``Lyapunov", ``Measurement" and
``uncertainties" mean the evolution process of the quantum system
under the Lyapunov control law, the projective measurement and
uncertainties in the system Hamiltonian, respectively.}
\label{scheme1}
\end{figure}

From the above control algorithm, we see that the design of Lyapunov
functions and the selection of the period $T$ for the projective
measurements are the two most important tasks. To design a control
law for quantum systems, several Lyapunov functions have been
constructed, such as state distance-based and average value-based
approaches \cite{Mirrahimi et al 2005}-\cite{Kuang and Cong 2008}.
Here we select a function based on the Hilbert-Schmidt distance
between a state $|\psi\rangle$ and the sliding mode state
$|\phi_{j}\rangle$ as a Lyapunov function \cite{Grivopoulos and
Bamieh 2003}, \cite{Kuang and Cong 2008}; i.e.,
$$V(|\psi\rangle, S)=\frac{1}{2}(1-|\langle
\phi_{j}|\psi\rangle|^{2}).$$ It is clear that $V\geq 0$. The
first-order time derivative of $V$ is
\begin{equation}
\begin{array}{ll}
\dot{V}& =-\sum_{k=x,y,z}u_{k}\Im[\langle \psi|\phi_{j}\rangle
\langle \phi_{j}|I_{k}|\psi\rangle]\\
& =-\sum_{k=x,y,z}u_{k}|\langle \psi|\phi_{j}\rangle|\Im[e^{i\angle
\langle \psi|\phi_{j}\rangle} \langle \phi_{j}|I_{k}|\psi\rangle],\\
\end{array}
\end{equation}
where $\Im[a+bi]=b$ ($a, b\in \mathbf{R}$) and $\angle c$ denotes
the argument of a complex number $c$. To ensure $\dot{V}\leq 0$, we
choose the control laws as in \cite{Kuang and Cong 2008}:
\begin{equation}\label{Lyapunov_control}
 u_{k}=K_{k}f_{k}(\Im[e^{i\angle
\langle \psi|\phi_{j}\rangle} \langle \phi_{j}|I_{k}|\psi\rangle]),
\ \ \ (k=x,y,z)
\end{equation}
where $K_{k}>0$ may be used to adjust the control amplitude and
$f(\cdot)$ satisfies $xf(x)\geq 0$. Define $\angle \langle
\psi|\phi_{j}\rangle=0^{\circ}$ when $\langle
\psi|\phi_{j}\rangle=0$.

When one employs a Lyapunov methodology to design a control law,
LaSalle's invariance principle is a useful tool to analyze its
convergence. That is, if $\dot{x}(t)=g(x(t))$ is an autonomous
dynamical system with phase space $\Omega$ and $V(x)$ is a Lyapunov
function on $\Omega$ satisfying $V(x)> 0$ for all $x\neq x_{0}$ and
$\dot{V}(x(t))\leq 0$, any bounded solution converges to the
invariant set $E=\{x|\dot{V}(x(t))=0\}$ as $t\rightarrow +\infty$
(for details, see \cite{LaSalle and Lefschetz 1961}). For two-level
quantum systems, LaSalle's invariance principle can guarantee that
the quantum state converges to the sliding mode $|0\rangle$ under
the control law in (\ref{Lyapunov_control}) (for details, see
\cite{Kuang and Cong 2008}). The convergence is asymptotic. Hence,
we make a projective measurement with the measurement operator
$\sigma_{z}$ when we apply the Lyapunov control to the system for
$T_{0}$ (corresponding to the initial condition $|\psi_{0}\rangle$)
or $T_{1}$ (corresponding to the initial condition $|1\rangle$),
which will drive the system into $|0\rangle$ with a probability not
less than $1-p_{0}$.

Another important task is to design the measurement period $T$. We
can estimate a bound according to the bound $\epsilon$ on the
uncertainties and the allowed probability of failure $p_{0}$. Then,
we construct a period $T$ to guarantee control performance according
to the estimated bound. An extreme case is $T\rightarrow 0$. That
is, after the quantum system's state is driven into the sliding
mode, we make frequent measurements. This corresponds to the quantum
Zeno effect \cite{Itano et al 1990}, which is the inhibition of
transitions between quantum states by frequent measurement of the
state (see, e.g., \cite{Misra and Sudarshan} and \cite{Itano et al
1990}). Frequent measurements (i.e., $T\rightarrow 0$) can guarantee
that the state is maintained in the sliding mode in spite of the
presence of uncertainties. However, it is usually a difficult task
to make such frequent measurements. We may conclude that the smaller
$T$ is, the bigger the cost of accomplishing the periodic
measurements becomes. Hence, in contrast to the quantum Zeno effect,
we wish to design a measurement period $T$ as large as possible. In
the following subsection, we will propose two approaches of
designing $T$ for different situations.


\subsection{The Design of the Measurement Period $T$}
We select the sliding mode as $S(|\psi\rangle, H)=1-|\langle
\psi|0\rangle|^2=0$. If there exist no uncertainties and we have
driven the system's state to the sliding mode at time $t_{0}$, it
will be maintained in the sliding mode using only the free
Hamiltonian $H_{0}$; i.e., $S(|\psi_{(t\geq t_{0})}\rangle,
H_{0})=0$. That is, if the quantum system's state is driven into the
sliding mode, it will evolve in the sliding surface. However, in
practical applications, some uncertainties are unavoidable, which
may drive the system's state away from the sliding mode. We wish to
design a control law to ensure the desired robustness in the
presence of uncertainties. Assume that the state at time $t$ is
$\rho_{t}$. If we make measurements on this system, the probability
$p$ that it will collapse into $|1\rangle$ (the probability of
failure) is
\begin{equation}\label{fidelity}
p=\langle 1|\rho_{t}|1\rangle=\frac{1-z_{t}}{2} ,
\end{equation}
where $z_{t}=\text{tr}(\rho_{t}\sigma_{z})$. We have assumed that
the possible uncertainties can be described by
$H_{\Delta}=\epsilon_{x}(t)I_{x}+\epsilon_{y}(t)I_{y}+\epsilon_{z}(t)I_{z}$,
where unknown $\epsilon_{x}(t)$, $\epsilon_{y}(t)$ and
$\epsilon_{z}(t)$ satisfy
$\sqrt{\epsilon_{x}^{2}(t)+\epsilon_{y}^{2}(t)+\epsilon_{z}^{2}(t)}\leq
\epsilon$. We now give detailed discussions to design the
measurement period $T$ for possible uncertainties.

First we consider a special case $H_{\Delta}=\epsilon(t)I_{z}$
($|\epsilon(t)|\leq \epsilon$). This case corresponds to phase-flip
type bounded uncertainties. For any $H_{\Delta}=\epsilon_{z}I_{z}$
(where $|\epsilon_{z}|\leq \epsilon$), if $S(|\psi_{0}\rangle,
H)=0$, we have
\begin{equation}
\begin{array}{ll}
S(|\psi(t)\rangle,H)& =1-|\langle \psi(t)|0\rangle|^{2}\\
& =1-|\langle
\psi_{0}|e^{i(H_{0}+\epsilon_{z}I_{z})t}|0\rangle|^{2}\\
& =1-|\langle \psi_{0}|0\rangle|^{2}
|e^{\frac{i(1+\epsilon_{z})}{2}t}|^{2}\\
& =0 .
\end{array}
\end{equation}
This type of uncertainty does not drive the system's state away from
the sliding mode. Hence we ignore this type of uncertainty in our
method.

Now we consider the unknown uncertainties
$H_{\Delta}=\epsilon_{x}(t)I_{x}+\epsilon_{y}(t)I_{y}$ (where
$\sqrt{\epsilon_{x}^{2}(t)+\epsilon_{y}^{2}(t)} \leq \epsilon$) and
have the following theorem.
\begin{theorem}
For a two-level quantum system with the initial state
$|\psi(0)\rangle=|0\rangle$ at the time $t=0$, the system evolves to
$|\psi(t)\rangle$ under the action of $H(t)=I_{z}+\epsilon_{x}(t)
I_{x}+\epsilon_{y}(t) I_{y}$ (where
$\sqrt{\epsilon_{x}^{2}(t)+\epsilon_{y}^{2}(t)}\leq \epsilon$ and
$\epsilon >0$). If $t\in [0, T^{(1)}]$, where
\begin{equation}\label{1period}
T^{(1)}=\frac{\arccos(1-2p_{0})}{\epsilon},
\end{equation}
the system's state will remain in $\mathcal{D}=\{|\psi\rangle :
|\langle 0 |\psi\rangle|^{2}\geq 1-p_{0}\}$ (where $0< p_{0}< 1$).
When one makes a projective measurement with the measurement
operator $\sigma_{z}$ at the time $t$, the probability of failure
$p=|\langle 1|\psi(t)\rangle|^{2}$ is not greater than $p_{0}$.
\end{theorem}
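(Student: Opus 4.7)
The plan is to pass to the Bloch vector representation and reduce the statement to a one dimensional differential inequality for the $z$-component. Writing $H(t)=\tfrac12(\epsilon_x(t)\sigma_x+\epsilon_y(t)\sigma_y+\sigma_z)$ and $\rho(t)=\tfrac12(I+\mathbf{r}(t)\cdot\sigma)$, the von Neumann equation $\dot\rho=-i[H,\rho]$ translates into the Bloch equation $\dot{\mathbf{r}}=\mathbf{h}(t)\times\mathbf{r}$ with $\mathbf{h}(t)=(\epsilon_x(t),\epsilon_y(t),1)$. In particular only the transverse components of $\mathbf{h}$ can change the $z$-coordinate, and reading off the third component of the cross product gives
\begin{equation}
\dot z(t)=\epsilon_x(t)\,y(t)-\epsilon_y(t)\,x(t).
\end{equation}

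Next I would bound the right hand side. The Cauchy--Schwarz inequality together with the hypothesis $\sqrt{\epsilon_x^2(t)+\epsilon_y^2(t)}\le\epsilon$ gives $|\dot z(t)|\le\epsilon\sqrt{x^2(t)+y^2(t)}$, and because the evolution is unitary the state remains pure, so $x^2+y^2+z^2=1$ and therefore
\begin{equation}
|\dot z(t)|\le\epsilon\sqrt{1-z^2(t)}.
\end{equation}
Since $\tfrac{d}{dt}\arccos z(t)=-\dot z(t)/\sqrt{1-z^2(t)}$, this is exactly the statement that $\arccos z(t)$ is $\epsilon$-Lipschitz. Integrating from the initial condition $z(0)=1$ (because $|\psi(0)\rangle=|0\rangle$ corresponds to the north pole) yields $\arccos z(t)\le\epsilon t$, i.e.\ $z(t)\ge\cos(\epsilon t)$ whenever $\epsilon t\le\pi$.

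Finally I would translate this back to the fidelity. Since $p=|\langle 1|\psi(t)\rangle|^2=(1-z(t))/2$, the inequality $z(t)\ge 1-2p_0$ is equivalent to $p\le p_0$, and using the monotonicity of $\cos$ on $[0,\pi]$ the condition $\cos(\epsilon t)\ge 1-2p_0$ becomes precisely $t\le\arccos(1-2p_0)/\epsilon=T^{(1)}$. Combining the two inequalities, for all $t\in[0,T^{(1)}]$ we obtain $|\langle 0|\psi(t)\rangle|^2=(1+z(t))/2\ge 1-p_0$, which is the claimed containment in $\mathcal{D}$ and the bound on the measurement failure probability.

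I expect the only subtle step to be the integration of the differential inequality; one must note that $\arccos$ is monotone decreasing and that $\epsilon t\le\arccos(1-2p_0)\le\pi$ throughout the interval of interest, so the $\cos$ inversion is valid. Everything else is algebraic bookkeeping on the Bloch sphere and a single application of Cauchy--Schwarz.
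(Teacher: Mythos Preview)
Your argument is correct and considerably more direct than the paper's. The paper proceeds by casting the search for the ``worst'' uncertainty as an optimal control problem, invokes Pontryagin's minimum principle to conclude that the extremal perturbation is bang-bang, proves two comparison lemmas (one comparing $I_z+\epsilon_0\cos\gamma_0 I_x+\epsilon_0\sin\gamma_0 I_y$ against its $I_z$-free counterpart, one comparing $I_z+\epsilon(t)I_x$ against $\epsilon I_x$), and then patches these together with a local comparison argument to show $z_t^A\ge z_t^B=\cos(\epsilon t)$ for the reference dynamics $H^B=\epsilon I_x$. You bypass all of this machinery: the single observation $\dot z=\epsilon_x y-\epsilon_y x$ combined with Cauchy--Schwarz and purity gives $|\dot z|\le\epsilon\sqrt{1-z^2}$ directly, which is precisely the statement that the polar angle $\theta=\arccos z$ moves with speed at most $\epsilon$, and the bound $z(t)\ge\cos(\epsilon t)$ follows immediately. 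Your approach is more elementary, more geometric, and shorter; the paper's approach has the compensating advantage that the Pontryagin framework is reused to prove Theorem~2, where identifying the actual extremal control (rather than a mere upper bound) is needed to obtain the sharper period $T^{(2)}$. The one place you should be slightly more careful is the integration step near $z=1$, where $\arccos$ is not differentiable; your Lipschitz phrasing handles this, but it is worth noting explicitly that $|\dot z|\le\epsilon\sqrt{1-z^2}$ forces $\dot z=0$ whenever $z=\pm1$, so the inequality $\arccos z(t)-\arccos z(s)\le\epsilon(t-s)$ extends by continuity across such points.
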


Using Theorem 1, we may try to maintain the system's state in
$\mathcal{D}$ (i.e., the subtask (ii)) by implementing periodic
projective measurements with the measurement period $T=T^{(1)}$. If
we have more knowledge about the uncertainties, it is possible to
improve the measurement period $T^{(1)}$. Now assume that the
uncertainty is $H_{\Delta}=\epsilon(t)I_{\zeta}$ ($\zeta=x\
\text{or}\ y$) and $p_{0}\in (0,
\frac{\epsilon^{2}}{1+\epsilon^{2}}]$. We have the following
theorem.

\begin{theorem}
For a two-level quantum system with the initial state
$|\psi(0)\rangle=|0\rangle$ at the time $t=0$, the system evolves to
$|\psi(t)\rangle$ under the action of
$H(t)=I_{z}+\epsilon(t)I_{\zeta}$ (where $\zeta=x\ \text{or}\ y$,
$|\epsilon(t)|\leq \epsilon$ and $\epsilon
>0$). If $p_{0}\in (0, \frac{\epsilon^{2}}{1+\epsilon^{2}}]$ and
$t\in [0, T^{(2)}]$, where
\begin{equation}\label{2period}
T^{(2)}=\frac{\arccos[1-2(1+\frac{1}{\epsilon^{2}})p_{0}]}{\sqrt{1+\epsilon^{2}}},
\end{equation}
the system's state will remain in $\mathcal{D}=\{|\psi\rangle :
|\langle 0 |\psi\rangle|^{2}\geq 1-p_{0}\}$ (where $0< p_{0}< 1$).
When one makes a projective measurement with the measurement
operator $\sigma_{z}$ at the time $t$, the probability of failure
$p=|\langle 1|\psi(t)\rangle|^{2}$ is not greater than $p_{0}$.
\end{theorem}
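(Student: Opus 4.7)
\smallskip
\noindent\textbf{Proof proposal.} The plan is to reduce the theorem to a direct calculation for a constant uncertainty profile $\epsilon(t)\equiv \epsilon$ and then argue that this constant profile is already the worst case over the time interval $[0,T^{(2)}]$. First I would observe that the two cases $\zeta = x$ and $\zeta = y$ are unitarily equivalent: the rotation $e^{-iI_{z}\pi/2}$ commutes with $I_{z}$ and conjugates $I_{x}$ into $I_{y}$ (up to a sign), while it leaves the state $|0\rangle$ and the observable $|1\rangle\langle 1|$ invariant (up to global phase). So it suffices to treat $\zeta = x$ and work with the Hamiltonian $H(t) = I_{z} + \epsilon(t) I_{x}$ acting on $|\psi(0)\rangle = |0\rangle$.

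Next, I would compute $|\langle 1|\psi(t)\rangle|^{2}$ explicitly for the constant case $\epsilon(t) \equiv \epsilon$. Writing $H = \tfrac{1}{2}\sqrt{1+\epsilon^{2}}\,\vec{n}\cdot\vec{\sigma}$ with $\vec{n} = (\epsilon, 0, 1)/\sqrt{1+\epsilon^{2}}$, the Bloch vector starting at $(0,0,1)$ precesses about $\vec{n}$ at angular frequency $\sqrt{1+\epsilon^{2}}$. Either by writing $e^{-iHt}|0\rangle$ in closed form or by solving the Schr\"odinger equation for $c_{0} = \langle 0|\psi\rangle$, $c_{1} = \langle 1|\psi\rangle$, one obtains
\begin{equation*}
|\langle 1|\psi(t)\rangle|^{2} \;=\; \frac{\epsilon^{2}}{1+\epsilon^{2}}\sin^{2}\!\left(\tfrac{\sqrt{1+\epsilon^{2}}}{2}\,t\right).
\end{equation*}
Imposing $|\langle 1|\psi(t)\rangle|^{2} \leq p_{0}$ and inverting the resulting inequality leads precisely to $t \leq T^{(2)}$; the hypothesis $p_{0}\in (0, \epsilon^{2}/(1+\epsilon^{2})]$ is exactly what is needed for the argument of $\arccos$ in $T^{(2)}$ to lie in $[-1,1]$, so the constant-case analysis saturates at $p_{0} = \epsilon^{2}/(1+\epsilon^{2})$ when $t = \pi/\sqrt{1+\epsilon^{2}}$.

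The main obstacle is then to upgrade this constant-$\epsilon$ computation to arbitrary time-varying $\epsilon(t)$ with $|\epsilon(t)|\leq \epsilon$; that is, to prove the constant profile is genuinely the worst case on $[0,T^{(2)}]$. In the Bloch picture the equations read $\dot{x} = -y$, $\dot{y} = x - \epsilon(t) z$, $\dot{z} = \epsilon(t)\,y$, with $(x,y,z)(0) = (0,0,1)$. Since $|\langle 1|\psi(t)\rangle|^{2} = (1-z(t))/2$, the goal is to minimize $z(t)$ against the adversarial choice of $\epsilon(\cdot)$. My plan is to apply a greedy/Pontryagin-type argument: at the initial instant $\dot{y}(0) = -\epsilon(0)$, so choosing $\epsilon(0) = +\epsilon$ makes $y$ negative; then $\dot z = \epsilon y$ is minimized instantaneously by keeping $\epsilon(t) = +\epsilon$ as long as $y(t) < 0$. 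Under the constant profile, $y(t) < 0$ for all $t \in (0,\pi/\sqrt{1+\epsilon^{2}})$, which contains $[0, T^{(2)}]$, so the greedy choice remains constant throughout the relevant window. A comparison argument (or a direct check using Pontryagin's maximum principle, where the costate equations make the switching function sign-definite on $[0,T^{(2)}]$) should then promote this local optimality to a global bound, giving $z(t) \geq z_{\mathrm{const}}(t)$ for every admissible $\epsilon(\cdot)$ on $[0, T^{(2)}]$. Combining this with the closed-form bound from the second paragraph yields $|\langle 1|\psi(t)\rangle|^{2} \leq p_{0}$ and completes the proof.
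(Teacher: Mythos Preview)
Your approach is essentially the paper's: cast the worst case as an optimal control problem minimizing $z_{f}$, invoke Pontryagin's principle to get bang-bang structure, check that the switching function $\lambda_{3}y-\lambda_{2}z$ keeps a fixed sign on $[0,\pi/\sqrt{1+\epsilon^{2}}]$ so no switch occurs, and then read off $T^{(2)}$ from the explicit constant-$\bar\epsilon$ solution (your unitary reduction of $\zeta=y$ to $\zeta=x$ is a clean replacement for the paper's ``same argument works''). The one step you omit but the paper handles explicitly is the exclusion of singular arcs---time intervals on which the switching function vanishes identically---since Pontryagin's necessary condition gives no information there; your greedy/instantaneous-minimization heuristic does not by itself close this gap, so you should plan to rule out singularity directly (the paper does this by combining the costate boundary data with the state constraint $x^{2}+y^{2}+z^{2}=1$).
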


\begin{remark}
The proofs of Theorem 1 and Theorem 2 will be presented in Section
\ref{Sec4}. The two theorems mean the following fact. For a
two-level quantum system with unknown uncertainties
$H_{\Delta}=\epsilon_{x}(t)I_{x}+\epsilon_{y}(t)I_{y}$ (where
$\sqrt{\epsilon_{x}^{2}(t)+\epsilon_{y}^{2}(t)}\leq \epsilon$), if
its initial state is in the sliding mode $|0\rangle$, we can ensure
that the probability of failure is not greater than a given constant
$p_{0}$ ($0< p_{0}< 1$) through implementing periodic projective
measurements with the measurement period $T=T^{(1)}$ using
(\ref{1period}). Further, if we know that $p_{0}$ and $\epsilon$
satisfy the relationship $0<p_{0}\leq
\frac{\epsilon^{2}}{1+\epsilon^{2}}$ and there exists only one type
of uncertainty (i.e., $H_{\Delta}=\epsilon(t)I_{x}$ or
$H_{\Delta}=\epsilon(t)I_{y}$, where $|\epsilon(t)|\leq \epsilon$),
we can design a measurement period $T=T^{(2)}$ using (\ref{2period})
which is larger than $T^{(1)}$. The proof of Theorem 2 also shows
that $T^{(2)}$ is an optimal measurement period. This measurement
period will guarantee the required robustness. It is easy to prove
the relationship $T^{(2)}\geq T^{(1)}$ for arbitrary $p_{0}\in (0,
\frac{\epsilon^{2}}{1+\epsilon^{2}}]$. The detailed proof will be
presented in the Appendix. Based on the above analysis, the
selection rule for $T$ is summarized in Table I. Moreover, from
(\ref{1period}) and (\ref{2period}), it is clear that for a constant
$\epsilon$, $T^{(1)}\rightarrow 0$ and $T^{(2)}\rightarrow 0$ when
$p_{0}\rightarrow 0$. That is, for a given bound $\epsilon$ on the
uncertainties, if we expect to guarantee the probability of failure
$p_{0}\rightarrow 0$, it requires us to implement frequent
measurements such that the measurement period $T\rightarrow 0$.
Another special case is $\epsilon\rightarrow +\infty$, which leads
to $T^{(1)}\rightarrow 0$ and $T^{(2)}\rightarrow 0$. That is, to
deal with very large uncertainties, we need to make frequent
measurements ($T\rightarrow 0$) to guarantee the desired robustness.
From (\ref{1period}), we also know that for a given $p_{0}$,
$T^{(1)}$ monotonically decreases with increasing $\epsilon$. This
means that we need to employ a smaller measurement period to deal
with uncertainties with a larger bound $\epsilon$.
\end{remark}
\begin{table}[!hbp]
\centering \caption{A summary on the selection rule of the
measurement period $T$}
\begin{tabular}{|c|c|c|}
\hline
Type of uncertainties & $H_{\Delta}=\epsilon_{x}(t)I_{x}+\epsilon_{y}(t)I_{y}$
& $H_{\Delta}=\epsilon(t)I_{\zeta}$ ($\zeta=x\ \text{or}\ y)$ \\
\hline Allowed probability of failure $p_{0}$ & $0< p_{0}< 1$ &
\begin{tabular}{c|c} $0< p_{0}\leq
\frac{\epsilon^{2}}{1+\epsilon^{2}}$  &
 $\frac{\epsilon^{2}}{1+\epsilon^{2}}< p_{0}< 1$ \end{tabular}  \\
\hline The measurement period $T$ & $T=T^{(1)}$ &
\begin{tabular}{c|c}
$T=T^{(2)}$ \ \ \
& \ \ \ $T=T^{(1)}$  \end{tabular}   \\
\hline
\end{tabular}
\end{table}
\subsection{An Illustrative Example}
Now we present an illustrative example to demonstrate the proposed
method. Assume $p_{0}=0.01$. Consider two cases: (a)
$\epsilon=0.02$; (b) $\epsilon=0.2$. For simplicity, we assume
$|\psi_{0}\rangle=|1\rangle$. Hence, $T_{0}=T_{1}$. We first design
the control and $T_{1}$ using (\ref{Lyapunov_control}). Here, we
consider control only using $H_{u}=\frac{1}{2}u(t)\sigma_{y}$. Using
(\ref{Lyapunov_control}), we select $u(t)=K(\Im[e^{i\angle \langle
\psi(t)|0\rangle} \langle 0|\sigma_{y}|\psi(t)\rangle])$ and
$K=100$. Let the time stepsize be given by $\delta t=10^{-4}$. We
can obtain the probability curve of $|0\rangle$ shown in Fig.
\ref{SMC_probability}, the control value shown in Fig.
\ref{SMC_control} and $T_{1}=0.060$. For $\epsilon=0.02$, we can
design the measurement period $T=T^{(1)}=10.017$ using
(\ref{1period}). For $\epsilon=0.2$, we can design the measurement
period $T=T^{(1)}=1.002$ using (\ref{1period}). Since
$p'=\frac{\epsilon^{2}}{1+\epsilon^{2}}=3.8 \times 10^{-2}>p_{0}$
when $\epsilon=0.2$, if the uncertainties take the form of
$H_{\Delta}=\epsilon(t)I_{\zeta}$ ($\zeta=x\ \text{or}\ y)$, we can
improve the measurement period to $T=T^{(2)}=1.049$ using
(\ref{2period}). It is clear that $T\gg T_{1}$ in these two cases.
For some practical quantum systems such as spin systems in NMR, we
can use strong control actions (e.g., $K=10^{5}$) to drive the
system from $|1\rangle$ into $\mathcal{D}$ within a short time
period $T_{1}$ \cite{Khaneja et al 2001}. These facts make the
assumption of no uncertainties in the control process reasonable.
Moreover, the fact that the measurement period $T$ is much greater
than the control time required to go to $|0\rangle$ from $|1\rangle$
indicates the possibility of realizing such a periodic measurement
on a practical quantum system.

\begin{figure}
\centering
\includegraphics[width=3.6in]{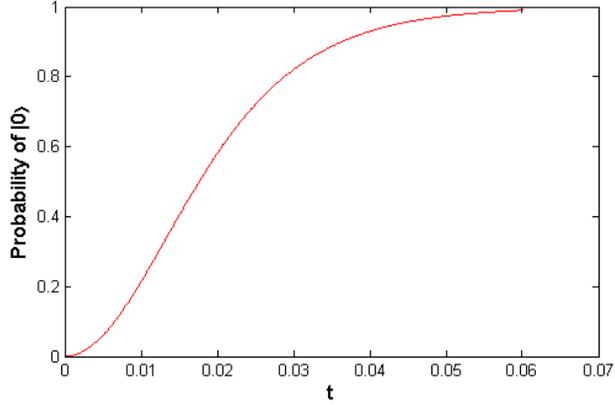}
\caption{The probability of $|0\rangle$ under Lyapunov control.}
\label{SMC_probability}
\end{figure}

\begin{figure}
\centering
\includegraphics[width=3.6in]{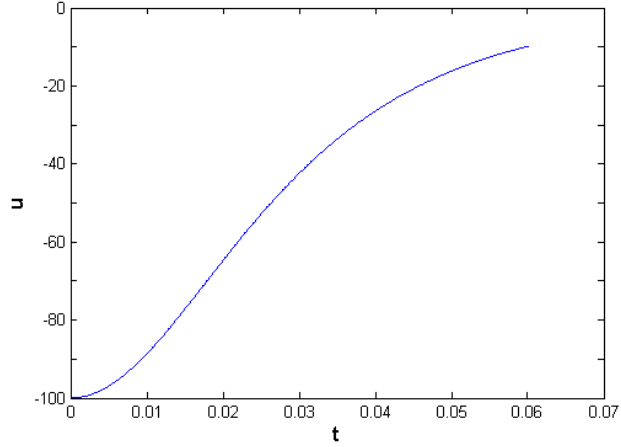}
\caption{The control value $u(t)$.} \label{SMC_control}
\end{figure}

\begin{remark}
In the process of designing the control law for driving the system's
state from $|1\rangle$ to $|0\rangle$, we employ an approach based
on the Lyapunov methodology. An advantage of such an approach is
that it is relatively easy to find a control law by simulation. It
is worth noting that most existing applications of the Lyapunov
methodology to quantum systems do not involve measurement. Here, we
combine the Lyapunov-based control and projective measurements for
controlling quantum systems, which in some applications make our
method more useful than the Lyapunov-based control for quantum
systems proposed in previous papers. In \cite{Dong and Petersen
2009NJP}, an approach based on time-optimal control design has also
been proposed to complete this task. The advantage of such an
approach is that we take the shortest time to complete the control
task. However, it is generally difficult to find a complete
time-optimal solution for high-dimensional quantum systems. For the
above simple task, it has been proven that the time-optimal control
employs a bang-bang control strategy \cite{Boscain and Mason 2006}.
Using the method in \cite{Boscain and Mason 2006}, we should take
$u=-100$ in $t \in [0,\ 0.016]$ and then use $u=100$ in $t\in
(0.016,\ 0.030]$. In this case, the total time required is
$T'_{1}=0.030$ ($< T_{1}=0.060$).
\end{remark}

\begin{remark}
In the process of designing the Lyapunov control for driving the
system's state from $|1\rangle$ to $|0\rangle$, we ignore possible
uncertainties. By simulation, we find that small uncertainties can
also be tolerated in this process. For example, if $\epsilon=0.02$
and the uncertainty $\epsilon(t)$ is the noise with a uniform
distribution on the interval $[-0.02, 0.02]$, the probability curves
of $|0\rangle$ are shown in Fig. \ref{figure_with_noise} when we
apply the control obtained from Fig. \ref{SMC_control} to the
quantum system. The probabilities of $|0\rangle$ for the cases with
uncertainties are very close to the probability of $|0\rangle$ for
the case without uncertainties. By more simulation, we find that the
final probability of $|0\rangle$ is $(99.00\pm 0.01)\%$ for
$\epsilon(t)I_x$ ($|\epsilon(t)|\leq \epsilon$ where $\epsilon=0.02$
or $0.2$), the final probability of $|0\rangle$ is $(99.00\pm
0.02)\%$ for $\epsilon(t)I_y$ ($|\epsilon(t)|\leq 0.02$) and the
final probability of $|0\rangle$ is $(99.00\pm 0.13)\%$ for
$\epsilon(t)I_y$ ($|\epsilon(t)|\leq 0.2$). If we use a smaller
probability of failure $\tilde{p}_{0}$ (e.g., $\tilde{p}_{0}=0.5
p_{0}$) as the terminal condition of the Lyapunov control or employ
a bigger $K$ for the same $T_{1}$, these simulations suggest that it
is possible to ensure that the Lyapunov control will drive the
system's state into the sliding mode domain even when there exist
small uncertainties.

\begin{figure}
\centering
\includegraphics[width=6.0in]{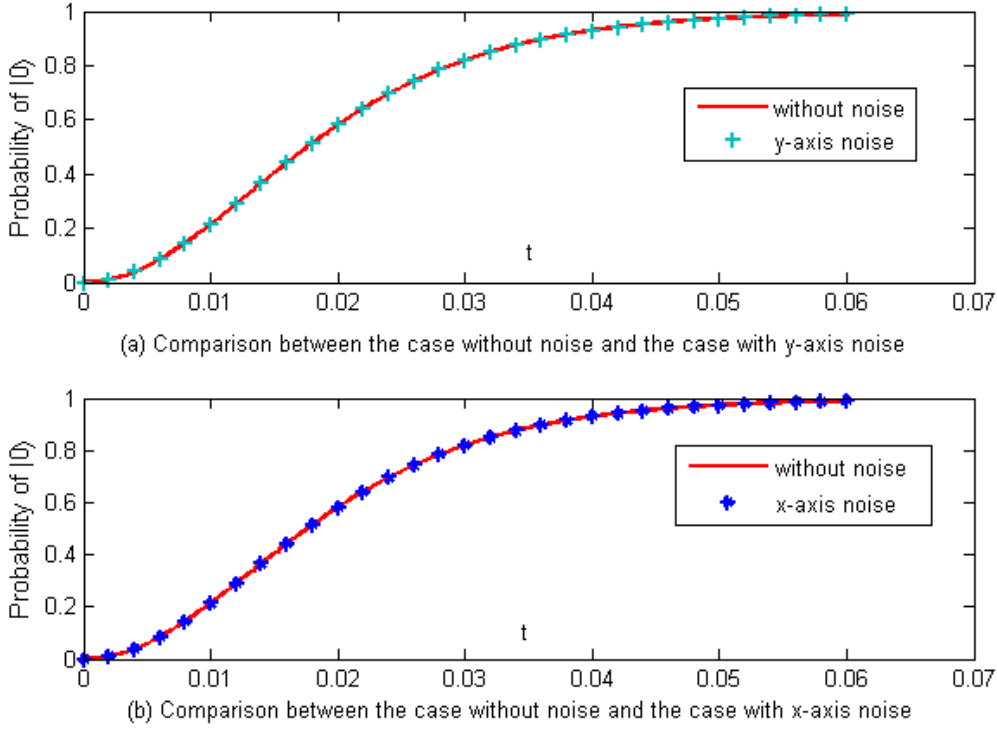}
\caption{The probability curves of $|0\rangle$ for the case without
uncertainties (without noise) and the cases with uncertainties
(noise) when we apply the control in Fig. \ref{SMC_control} to the
quantum system. The $\xi$-axis noise ($\xi=x$ or $y$) means the
existence of $\epsilon(t)I_{\xi}$ where $\epsilon(t)$ is the noise
with a uniform distribution on the interval $[-0.02,
0.02]$.}\label{figure_with_noise}
\end{figure}
\end{remark}

\section{Proof of Theorems}\label{Sec4}

This section will present the detailed proofs of Theorem 1 and
Theorem 2. The proof of Theorem 1 involves the following steps: (I)
Compare the probabilities of failure for
$H=I_{z}+\epsilon_{0}\cos\gamma_{0}I_{x}+\epsilon_{0}\sin\gamma_{0}I_{y}$
and
$H=\epsilon_{0}\cos\gamma_{0}I_{x}+\epsilon_{0}\sin\gamma_{0}I_{y}$
($\epsilon_{0}$ and $\gamma_{0}$ are constant); (II) Compare the
probabilities of failure for $H=\epsilon I_{x}$ and
$H=I_{z}+\epsilon(t) I_{x}$ ($|\epsilon(t)|\leq \epsilon$); (III)
Use the previous results to compare the probabilities of failure for
$H=\epsilon I_{x}$ and $H=I_{z}+\epsilon_{x}(t)I_{x}+\epsilon_{y}(t)
I_{y}$ ($\sqrt{\epsilon_{x}^{2}(t)+\epsilon_{y}^{2}(t)}\leq
\epsilon$); (IV) Use $H=\epsilon I_{x}$ to estimate the measurement
period. The basic steps for the proof of Theorem 2 include: (I)
Formulate the problem of finding the ``worst" case as an optimal
control problem of $\min z_{f}$; (II) Obtain the optimal control
solution for nonsingular cases; (III) Exclude the possibility of
singular cases; (IV) Use the ``worst" case to estimate the
measurement period. Considering that the arguments in the proof of
Theorem 2 are useful for the proof of Theorem 1, we will first
present the proof of Theorem 2 and then prove Theorem 1.

\subsection{Proof of Theorem 2}
\begin{proof}
For $H^{A}=I_{z}+\epsilon(t)I_{x}$, using $\dot{\rho}=-i[H^{A},
\rho]$ and (\ref{eq4}), we have
\begin{equation}
\left(%
\begin{array}{cc}
  \dot{z}_{t} & \dot{x}_{t}-i\dot{y}_{t} \\
  \dot{x}_{t}+i\dot{y}_{t} & -\dot{z}_{t} \\
\end{array}%
\right)
=\left(%
\begin{array}{cc}
  \epsilon(t) y_{t} & -y_{t}-ix_{t}+i\epsilon(t) z_{t} \\
  -y_{t}+ix_{t}-i\epsilon(t) z_{t} & -\epsilon(t) y_{t} \\
\end{array}%
\right).
\end{equation}
That is,
\begin{equation}\label{Thm2stateEq}
\left(%
\begin{array}{c}
  \dot{x}_{t} \\
  \dot{y}_{t} \\
  \dot{z}_{t} \\
\end{array}%
\right)
=\left(%
\begin{array}{ccc}
  0 & -1 & 0 \\
  1 & 0  & -\epsilon(t) \\
  0 & \epsilon(t) & 0 \\
\end{array}%
\right) \left(%
\begin{array}{c}
  x_{t} \\
  y_{t} \\
  z_{t} \\
\end{array}%
\right),
\end{equation}
where $(x_{0}, y_{0}, z_{0})=(0, 0, 1)$.

We now consider $\epsilon(t)$ as a control input and select the
performance measure as
\begin{equation}
J(\epsilon)=z_{f} .
\end{equation}
From (\ref{fidelity}), we know that the ``worst" case (i.e., the
case maximizing the probability of failure) corresponds to
minimizing $z_{f}$. Also, we introduce the Lagrange multiplier
vector $\lambda(t)=(\lambda_{1}(t), \lambda_{2}(t),
\lambda_{3}(t))^{T}$ and obtain the corresponding Hamiltonian
function as follows:
\begin{equation}
\mathbb{H}({\mathbf{r}(t),\epsilon(t),\mathbf{\lambda}(t),t})\equiv
\lambda^{T}(t)\left(%
\begin{array}{ccc}
  0 & -1 & 0 \\
  1 & 0  & -\epsilon(t) \\
  0 & \epsilon(t) & 0 \\
\end{array}%
\right) \left(%
\begin{array}{c}
  x_{t} \\
  y_{t} \\
  z_{t} \\
\end{array}%
\right),
\end{equation}
where $\mathbf{r}(t)=(x_{t}, y_{t}, z_{t})$. That is
\begin{equation}
\mathbb{H}({\mathbf{r}(t),\epsilon(t),\mathbf{\lambda}(t),t})
=-\lambda_{1}(t)y_{t}+\lambda_{2}(t)x_{t}+\epsilon(t)(\lambda_{3}(t)y_{t}-\lambda_{2}(t)z_{t}).
\end{equation}
According to Pontryagin's minimum principle \cite{Kirk 1970}, a
necessary condition for $\epsilon^{*}(t)$ to minimize
$J(\epsilon)$ is
\begin{equation}\label{necessary_condition}
\mathbb{H}({\mathbf{r}^{*}(t),\epsilon^{*}(t),\mathbf{\lambda}^{*}(t),t})\leq
\mathbb{H}({\mathbf{r}^{*}(t),\epsilon(t),\mathbf{\lambda}^{*}(t),t})
.
\end{equation}
The necessary condition provides a relationship to determine the
optimal control $\epsilon^{*}(t)$. If there exists a time interval
$[t_{1}, t_{2}]$ of finite duration during which the necessary
condition (\ref{necessary_condition}) provides no information about
the relationship between $\mathbf{r}^{*}(t)$, $\epsilon^{*}(t)$,
$\mathbf{\lambda}^{*}(t)$, we call the interval $[t_{1}, t_{2}]$ a
singular interval \cite{Kirk 1970}. If we do not consider singular
cases (i.e., $\lambda_{3}(t)y_{t}-\lambda_{2}(t)z_{t}\equiv 0$), the
optimal control $\epsilon^{*}(t)$ should be chosen as follows:
\begin{equation}\label{bangbang}
\epsilon^{*}(t)=-\epsilon
\text{sgn}(\lambda_{3}(t)y_{t}-\lambda_{2}(t)z_{t}).
\end{equation}
That is, the optimal control strategy for $\epsilon(t)$ is bang-bang
control; i.e., $\epsilon^{*}(t)=\bar{\epsilon}=+\epsilon \ \text{or}
-\epsilon$. Now we consider $H^{B}=I_{z}+\bar{\epsilon}I_{x}$ which
leads to the state equation

\begin{equation}\label{eq20}
\left(%
\begin{array}{c}
  \dot{x}_{t} \\
  \dot{y}_{t} \\
  \dot{z}_{t} \\
\end{array}%
\right)
=\left(%
\begin{array}{ccc}
  0 & -1 & 0 \\
  1 & 0  & -\bar{\epsilon} \\
  0 & \bar{\epsilon} & 0 \\
\end{array}%
\right) \left(%
\begin{array}{c}
  x_{t} \\
  y_{t} \\
  z_{t} \\
\end{array}%
\right),
\end{equation}
where $(x_{0}, y_{0}, z_{0})=(0, 0, 1)$.
The corresponding solution is
\begin{equation}\label{theorem1-bound-solution}
\left(%
\begin{array}{c}
  x_{t} \\
  y_{t} \\
  z_{t} \\
\end{array}%
\right)
=\left(%
\begin{array}{c}
  -\frac{\bar{\epsilon}}{1+\epsilon^{2}}\cos\omega t+\frac{\bar{\epsilon}}{1+\epsilon^{2}} \\
  -\frac{\bar{\epsilon}}{\sqrt{1+\epsilon^{2}}}\sin\omega t \\
  \frac{\epsilon^{2}}{1+\epsilon^{2}}\cos\omega t+\frac{1}{1+\epsilon^{2}} \\
\end{array}%
\right)
\end{equation}
where $\omega=\sqrt{1+\epsilon^{2}}$. From
(\ref{theorem1-bound-solution}), we know that $z_t$ is a
monotonically decreasing function in $t$ when $t\in [0,
\frac{\pi}{\sqrt{1+\epsilon^{2}}}]$. Hence, we only consider the
case $t \in [0,t_{f}]$ where $t_{f}\in [0,
\frac{\pi}{\sqrt{1+\epsilon^{2}}}]$.

Now consider the optimal control problem with a fixed final time
$t_{f}$ and a free final state $\mathbf{r}_{f}=(x_{f},y_{f},z_{f})$.
According to Pontryagin's minimum principle,
$\lambda^{*}(t_f)=\frac{\partial}{\partial
\mathbf{r}}\mathbf{r}^{*}(t_{f})$.
From this, it is straightforward to verify that
$(\lambda_{1}(t_f),\lambda_{2}(t_f),\lambda_{3}(t_f))=(0,0,1)$. Now
let us consider another necessary condition
$\dot{\lambda}(t)=-\frac{\partial
\mathbb{H}({\mathbf{r}(t),\epsilon(t),\mathbf{\lambda}(t),t})}{\partial
\mathbf{r}}$ which leads to the following relationships:

\begin{equation}\label{eq24b}
\dot{\mathbf{\lambda}}(t)=\left(%
\begin{array}{c}
  \dot{\lambda}_{1}(t) \\
  \dot{\lambda}_{2}(t) \\
  \dot{\lambda}_{3}(t) \\
\end{array}%
\right)
=\left(%
\begin{array}{ccc}
  0 & -1 & 0 \\
  1 & 0  & -\bar{\epsilon} \\
  0 & \bar{\epsilon} & 0 \\
\end{array}%
\right) \left(%
\begin{array}{c}
  \lambda_{1}(t) \\
  \lambda_{2}(t) \\
  \lambda_{3}(t) \\
\end{array}%
\right),
\end{equation}
where
$(\lambda_{1}(t_f),\lambda_{2}(t_f),\lambda_{3}(t_f))=(0,0,1)$. The
corresponding solution is
\begin{equation}
\left(%
\begin{array}{c}
  \lambda_{1}(t) \\
  \lambda_{2}(t) \\
  \lambda_{3}(t) \\
\end{array}%
\right)
=\left(%
\begin{array}{c}
  -\frac{\bar{\epsilon}}{1+\epsilon^{2}}\cos\omega(t_f-t)+\frac{\bar{\epsilon}}{1+\epsilon^{2}} \\
  \frac{\bar{\epsilon}}{\sqrt{1+\epsilon^{2}}}\sin\omega(t_f-t) \\
  \frac{\epsilon^{2}}{1+\epsilon^{2}}\cos\omega(t_f-t)+\frac{1}{1+\epsilon^{2}} \\
\end{array}%
\right).
\end{equation}

We obtain
\begin{equation}\label{singularcondition}
\lambda_{3}(t)y_{t}-\lambda_{2}(t)z_{t}
=\frac{-\bar{\epsilon}}{\omega^{3}}[\sin\omega
t+\epsilon^{2}\sin\omega t_f +\sin\omega(t_f-t)].
\end{equation}
It is easy to show that the quantity
$(\lambda_{3}(t)y_{t}-\lambda_{2}(t)z_{t})$ occurring in
(\ref{bangbang}) does not change its sign when $t_{f}\in [0,
\frac{\pi}{\sqrt{1+\epsilon^{2}}}]$ and $t \in [0,t_{f}]$.

Now we further exclude the possibility that there exists a singular
case. Suppose that there exists a singular interval $[t_{0}, t_{1}]$
(where $t_{0}\geq 0$) such that when $t\in [t_{0}, t_{1}]$
\begin{equation}\label{singular1Eq}
h(t)=\lambda_{3}(t)y_{t}-\lambda_{2}(t)z_{t}\equiv 0.
\end{equation}
We also have the following relationship
\begin{equation}\label{singular2Eq}
\dot{h}(t)=\lambda_{3}(t)x_{t}-\lambda_{1}(t)z_{t}\equiv 0
\end{equation}
where we have used (\ref{Thm2stateEq}) and the following costate
equation
\begin{equation}
\dot{\mathbf{\lambda}}(t)=\left(%
\begin{array}{c}
  \dot{\lambda}_{1}(t) \\
  \dot{\lambda}_{2}(t) \\
  \dot{\lambda}_{3}(t) \\
\end{array}%
\right)
=\left(%
\begin{array}{ccc}
  0 & -1 & 0 \\
  1 & 0  & -\epsilon(t) \\
  0 & \epsilon(t) & 0 \\
\end{array}%
\right) \left(%
\begin{array}{c}
  \lambda_{1}(t) \\
  \lambda_{2}(t) \\
  \lambda_{3}(t) \\
\end{array}%
\right).
\end{equation}

If $t_{0}=0$, we have $(x_{0}, y_{0}, z_{0})=(0, 0, 1)$. By the
principle of optimality \cite{Kirk 1970}, we may consider the case
$t_f=t_1$. Using (\ref{singular1Eq}), (\ref{singular2Eq}) and
$(\lambda_{1}(t_1),\lambda_{2}(t_1),\lambda_{3}(t_1))=(0,0,1)$, we
have $x_{t_1}=0$ and $y_{t_1}=0$. Using the relationship of
$x^{2}_{t}+y^{2}_{t}+z_{t}^{2}=1$, we obtain $z_{t_{1}}=1$ or $-1$.
If $z_{t_{1}}=1$, the initial state and the final state are the same
state $|0\rangle$. However, if we use the control
$\epsilon(t)=\bar{\epsilon}$, from (\ref{theorem1-bound-solution})
we have
$z_{t_1}(\bar{\epsilon})=\frac{\epsilon^2}{1+\epsilon^2}\cos\omega
t_1+\frac{1}{1+\epsilon^2}<z_{t_1}=1$. Hence, this contradicts the
fact that we are considering the optimal case $\min z_f$. If
$z_{t_{1}}=-1$, there exists $0< \tilde{t}_1< t_1$ such that
$z_{\tilde{t}_1}=0$. By the principle of optimality \cite{Kirk
1970}, we may consider the case $t_f=\tilde{t}_1$. From the two
equations (\ref{singular1Eq}) and (\ref{singular2Eq}), we know that
$z^{2}_{\tilde{t}_1}=1$ which contradicts $z_{\tilde{t}_1}=0$.
Hence, no singular condition can exist if $t_{0}=0$.

If $t_{0}> 0$, using (\ref{bangbang}) we must select
$\epsilon(t)=\bar{\epsilon}$ when $t\in [0, t_{0}]$. From
(\ref{singularcondition}), we know that there exist no $t_{0}\in (0,
t_{f})$ satisfying
$\lambda_{3}(t_{0})y_{t_{0}}-\lambda_{2}(t_{0})z_{t_{0}}=0$. Hence,
there exist no singular cases for our problem.

From the above analysis, $\epsilon(t)=\bar{\epsilon}$ is the optimal
control when $t\in [0, \frac{\pi}{\sqrt{1+\epsilon^{2}}}]$. Hence
$z^{A}_{t}=z_{t}(\epsilon(t))\geq z_{t}(\bar{\epsilon})=z^{B}_{t}$.
From (\ref{fidelity}), it is clear that the probabilities of failure
satisfy $p^{A}_{t}=\frac{1-z^{A}_{t}}{2}\leq
p^{B}_{t}=\frac{1-z^{B}_{t}}{2}$. That is, the probability of
failure $p_{t}^{A}$ is not greater than $p^{B}_{t}$ for $t\in [0,
\frac{\pi}{\sqrt{1+\epsilon^{2}}}]$. When $t\in [0,
\frac{\pi}{\sqrt{1+\epsilon^{2}}}]$, $z_{t}^{B}$ is monotonically
decreasing and $p_{t}^{B}$ is monotonically increasing. When
$t=\frac{\pi}{\sqrt{1+\epsilon^{2}}}$, using
(\ref{theorem1-bound-solution}) we have
$z^{B}_{t}=\frac{1-\epsilon^{2}}{1+\epsilon^{2}}$. That is, the
probability of failure $p'=\frac{\epsilon^{2}}{1+\epsilon^{2}}$.
Hence, we can design the measurement period $T$ using the case of
$H^{B}$ when $0< p_{0}\leq \frac{\epsilon^{2}}{1+\epsilon^{2}}$.

Using (\ref{fidelity}) and (\ref{theorem1-bound-solution}), for
$t\in [0, \frac{\pi}{\sqrt{1+\epsilon^{2}}}]$ we obtain the
probability of failure
\begin{equation}\label{period1}
p_{t}^{B}=\frac{\epsilon^{2}}{1+\epsilon^{2}}\frac{{1-\cos\omega
t}}{2} .
\end{equation}
Hence, we can design the maximum measurement period as follows
\begin{equation}
T^{(2)}=\frac{\arccos[1-2(1+\frac{1}{\epsilon^{2}})p_{0}]}{\sqrt{1+\epsilon^{2}}},
\end{equation}

For $H_{\Delta}=\epsilon(t)I_{y}$ (where $|\epsilon(t)|\leq
\epsilon$), we can obtain the same conclusion as that in the case
$H_{\Delta}=\epsilon(t)I_{x}$ (where $|\epsilon(t)|\leq \epsilon$).
\end{proof}

\subsection{Proof of Theorem 1}
To prove Theorem 1, we first prove two lemmas (Lemma \ref{lemmaB}
and Lemma \ref{lemma2}).

\begin{lemma}\label{lemmaB}
For a two-level quantum system with the initial state
$(x_{0},y_{0},z_{0})=(0,0,1)$ (i.e., $|0\rangle$), the system
evolves to $(x^{A}_{t},y^{A}_{t},z^{A}_{t})$ and
$(x^{B}_{t},y^{B}_{t},z^{B}_{t})$ under the action of
$H^{A}=I_{z}+\epsilon_{0}\cos\gamma_{0}I_{x}+\epsilon_{0}\sin\gamma_{0}I_{y}$
($\epsilon_{0}$ is a nonzero constant) and
$H^{B}=\epsilon_{0}\cos\gamma_{0}I_{x}+\epsilon_{0}\sin\gamma_{0}I_{y}$,
respectively. For arbitrary $t\in [0, \frac{\pi}{|\epsilon_{0}|}]$,
$z^{A}_{t}\geq z^{B}_{t}$.
\end{lemma}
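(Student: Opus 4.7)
The plan is to reduce the comparison to a one-variable trigonometric inequality and finish with elementary monotonicity arguments. First, since the transverse parts of $H^{A}$ and $H^{B}$ share the same $xy$-plane direction, conjugation by the $z$-rotation $e^{-i\gamma_{0}I_{z}}$ sends $H^{A}\to I_{z}+\epsilon_{0}I_{x}$ and $H^{B}\to \epsilon_{0}I_{x}$ while leaving every $z$-component unchanged. The sign of $\epsilon_{0}$ also does not affect $z$-components (directly from the explicit formulas below, or by an additional $\pi$-rotation about $z$), so I may assume $\gamma_{0}=0$ and $\epsilon_{0}>0$ throughout.

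In this reduced setting $H^{A}$ is precisely the Hamiltonian governing (\ref{Thm2stateEq}) with constant $\epsilon(t)\equiv\epsilon_{0}$, so formula (\ref{theorem1-bound-solution}) with $\bar{\epsilon}=\epsilon_{0}$ yields
\[
z^{A}_{t}=\frac{\epsilon_{0}^{2}\cos(\omega t)+1}{1+\epsilon_{0}^{2}},\qquad \omega:=\sqrt{1+\epsilon_{0}^{2}},
\]
while direct integration of $\dot{y}=-\epsilon_{0}z$, $\dot{z}=\epsilon_{0}y$ with $(y_{0},z_{0})=(0,1)$ gives $z^{B}_{t}=\cos(\epsilon_{0}t)$. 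I would then form $z^{A}_{t}-z^{B}_{t}$, clear the denominator, regroup as $(1-\cos(\epsilon_{0}t))-\epsilon_{0}^{2}(\cos(\epsilon_{0}t)-\cos(\omega t))$, and apply the identities $1-\cos\theta=2\sin^{2}(\theta/2)$, $\cos A-\cos B=2\sin\tfrac{A+B}{2}\sin\tfrac{B-A}{2}$ and $\sin(P+Q)\sin(P-Q)=\sin^{2}P-\sin^{2}Q$ (with $P=\omega t/2$, $Q=\epsilon_{0}t/2$) to collapse everything to
\[
z^{A}_{t}-z^{B}_{t}=\frac{2}{\omega^{2}}\bigl[\omega^{2}\sin^{2}(\epsilon_{0}t/2)-\epsilon_{0}^{2}\sin^{2}(\omega t/2)\bigr].
\]
The lemma therefore reduces to proving $\omega\sin(\epsilon_{0}t/2)\geq \epsilon_{0}\,|\sin(\omega t/2)|$ on $[0,\pi/\epsilon_{0}]$.

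Setting $a=\epsilon_{0}t/2\in[0,\pi/2]$ and $b=\omega t/2\geq a$, the target inequality becomes $\sin(a)/a\geq |\sin(b)|/b$ (trivial at $a=0$). I would split on whether $b\leq\pi$ or $b>\pi$: if $b\leq\pi$ then $\sin(b)\geq 0$ and the claim follows from the standard fact that $x\mapsto\sin(x)/x$ is strictly decreasing on $(0,\pi]$; if $b>\pi$ then $|\sin(b)|/b\leq 1/b<1/\pi$, while Jordan's inequality gives $\sin(a)/a\geq 2/\pi$ for $a\in(0,\pi/2]$, dominating with a factor-of-two slack. The only delicate step is the second case, which arises exactly when $\epsilon_{0}<1/\sqrt{3}$ and $t$ is near $\pi/\epsilon_{0}$; here the natural window from the lemma overshoots the half-period $\pi/\omega$ of the $H^{A}$-precession, so $|\sin(\omega t/2)|$ temporarily decreases below zero and the na\"ive monotonicity argument fails, but Jordan's inequality cleanly absorbs this regime.
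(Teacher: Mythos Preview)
Your argument is correct and takes a genuinely different route from the paper's. After the same preliminary reduction (the paper simply computes with general $\gamma_{0}$ and observes that $z^{A}_{t},z^{B}_{t}$ do not depend on it), the paper proceeds by a calculus argument: it sets $F(t)=z^{A}_{t}-z^{B}_{t}$, shows $F''(t)=\epsilon_{0}^{2}(\cos\epsilon_{0}t-\cos\omega_{0}t)\geq 0$ on $[0,\pi/\omega_{0}]$ via a product-to-sum identity, then integrates twice from $F(0)=F'(0)=0$ to get $F\geq 0$ on $[0,\pi/\omega_{0}]$; the remaining interval $[\pi/\omega_{0},\pi/|\epsilon_{0}|]$ is handled by noting that $z^{B}_{t}$ keeps decreasing there while $z^{A}_{t}$ never drops below its global minimum $z^{A}_{\pi/\omega_{0}}$, which already exceeds $z^{B}_{\pi/\omega_{0}}$. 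You instead factor $z^{A}_{t}-z^{B}_{t}$ algebraically into $\tfrac{2}{\omega^{2}}\bigl[\omega^{2}\sin^{2}(\epsilon_{0}t/2)-\epsilon_{0}^{2}\sin^{2}(\omega t/2)\bigr]$ and reduce everything to the single inequality $\sin(a)/a\geq|\sin(b)|/b$ with $0<a\leq\pi/2$ and $b>a$, which you dispatch by monotonicity of $\operatorname{sinc}$ on $(0,\pi]$ plus Jordan's inequality for $b>\pi$. Your approach yields a clean closed-form expression for the gap and avoids the interval-splitting of the paper at the level of $t$; the paper's approach is slightly more elementary in that it uses only sign analysis of $F''$ and basic monotonicity, without invoking Jordan. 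One cosmetic slip: in your last paragraph ``$|\sin(\omega t/2)|$ temporarily decreases below zero'' should read ``$\sin(\omega t/2)$ temporarily becomes negative''; this does not affect the argument.
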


\begin{proof}
For the system with Hamiltonian
$H^{A}=I_{z}+\epsilon_{0}\cos\gamma_{0}I_{x}+\epsilon_{0}\sin\gamma_{0}I_{y}$,
using $\dot{\rho}=-i[H, \rho]$ and (\ref{eq4}), we obtain the
following state equations
\begin{equation}\label{lemmaB1}
\left(%
\begin{array}{c}
  \dot{x}_{t}^{A} \\
  \dot{y}_{t}^{A} \\
  \dot{z}_{t}^{A} \\
\end{array}%
\right)
=\left(%
\begin{array}{ccc}
  0 & -1 & \epsilon_{0}\sin\gamma_{0} \\
  1 & 0  & -\epsilon_{0}\cos\gamma_{0} \\
  -\epsilon_{0}\sin\gamma_{0} & \epsilon_{0}\cos\gamma_{0} & 0 \\
\end{array}%
\right) \left(%
\begin{array}{c}
  x_{t}^{A} \\
  y_{t}^{A} \\
  z_{t}^{A} \\
\end{array}%
\right),
\end{equation}
where $(x_{0}^{A}, y_{0}^{A}, z_{0}^{A})=(0, 0, 1)$. The
corresponding solution is as follows
\begin{equation}
\left(%
\begin{array}{c}
  x_{t}^{A} \\
  y_{t}^{A} \\
  z_{t}^{A} \\
\end{array}%
\right)
=\\
\left(%
\begin{array}{c}
  \frac{\epsilon_{0}\sin\gamma_{0}}{\sqrt{1+\epsilon_{0}^{2}}}\sin\omega_{0} t-\frac{\epsilon_{0}\cos\gamma_{0}}{1+\epsilon_{0}^{2}}\cos\omega_{0} t+\frac{\epsilon_{0}\cos\gamma_{0}}{1+\epsilon_{0}^{2}} \\
  -\frac{\epsilon_{0}\cos\gamma_{0}}{\sqrt{1+\epsilon_{0}^{2}}}\sin\omega_{0}
  t-\frac{\epsilon_{0}\sin\gamma_{0}}{1+\epsilon_{0}^{2}}\cos\omega_{0}
  t+\frac{\epsilon_{0}\sin\gamma_{0}}{1+\epsilon_{0}^{2}} \\
  \frac{\epsilon_{0}^{2}}{1+\epsilon_{0}^{2}}\cos\omega_{0} t+\frac{1}{1+\epsilon_{0}^{2}} \\
\end{array}%
\right).
\end{equation}
where $\omega_{0}=\sqrt{1+\epsilon_{0}^{2}}$.

For the system with Hamiltonian
$H^{B}=\epsilon_{0}\cos\gamma_{0}I_{x}+\epsilon_{0}\sin\gamma_{0}I_{y}$,
using $\dot{\rho}=-i[H, \rho]$ and (\ref{eq4}), we obtain the
following state equations
\begin{equation}\label{lemmaB1}
\left(%
\begin{array}{c}
  \dot{x}_{t}^{B} \\
  \dot{y}_{t}^{B} \\
  \dot{z}_{t}^{B} \\
\end{array}%
\right)
=\left(%
\begin{array}{ccc}
  0 & 0 & \epsilon_{0}\sin\gamma_{0} \\
  0 & 0  & -\epsilon_{0}\cos\gamma_{0} \\
  -\epsilon_{0}\sin\gamma_{0} & \epsilon_{0}\cos\gamma_{0} & 0 \\
\end{array}%
\right) \left(%
\begin{array}{c}
  x_{t}^{B} \\
  y_{t}^{B} \\
  z_{t}^{B} \\
\end{array}%
\right),
\end{equation}
where $(x_{0}^{B}, y_{0}^{B}, z_{0}^{B})=(0, 0, 1)$. We can obtain
the corresponding solution as
\begin{equation}
\left(%
\begin{array}{c}
  x_{t}^{B} \\
  y_{t}^{B} \\
  z_{t}^{B} \\
\end{array}%
\right)
=\\
\left(%
\begin{array}{c}
  \sin\gamma_{0}\sin\epsilon_{0} t \\
  -\cos\gamma_{0}\sin\epsilon_{0}
  t \\
  \cos\epsilon_{0} t \\
\end{array}%
\right).
\end{equation}

Since $\cos\epsilon_0 t=\cos(-\epsilon_0 t)$, we may first assume
$\epsilon_0>0$. We define $F(t)$ and $f(t)$ as follows.
\begin{equation}
F(t)=z_{t}^{A}-z_{t}^{B},
\end{equation}

\begin{equation}\label{eq3}
f(t)=F'(t)=\epsilon_{0}\sin\epsilon_{0}t-\frac{\epsilon_{0}^{2}}{\sqrt{1+\epsilon_{0}^{2}}}\sin\omega_{0}
t .
\end{equation}

Now we consider $t\in [0, \frac{\pi}{\sqrt{1+\epsilon_{0}^{2}}}]$,
and obtain
\begin{equation}\label{eq5}
f'(t)=\epsilon_{0}^{2}(\cos\epsilon_{0}t-\cos\omega_{0}
t)=2\epsilon_{0}^{2}\sin\frac{\omega_{0}+\epsilon_{0}}{2}t\sin\frac{\omega_{0}-\epsilon_{0}}{2}t\geq
0
\end{equation}
It is clear that $f'(t)=0$ only when $t=0$. Hence $f(t)$ is a
monotonically increasing function and $$\min_t{f(t)}=f(0)=0 .$$

Hence, we have
\begin{equation}\label{eq6}
f(t)\geq 0 .
\end{equation}
From this, it is clear that $F(t)$ is a monotonically increasing
function and
$$\min_t{F(t)}=F(0)=0 .$$

Hence $F(t)\geq 0$ when $t\in [0,
\frac{\pi}{\sqrt{1+\epsilon_{0}^{2}}}]$. Moreover, it is clear that
$\min_t{z_{t}^{A}}=z_{t}^{A}|_{t=\frac{\pi}{\sqrt{1+\epsilon_{0}^{2}}}}$
and $z_{t}^{B}=\cos\epsilon_{0} t$ is a monotonically decreasing
function when $t\in [0, \frac{\pi}{|\epsilon_{0}|}]$. It is easy to
obtain the following relationship
$$z_{t}^{B}|_{t \in [\frac{\pi}{\sqrt{1+\epsilon_{0}^{2}}},
\frac{\pi}{|\epsilon_{0}|}]}\leq
z_{t}^{B}|_{t=\frac{\pi}{\sqrt{1+\epsilon_{0}^{2}}}}<z_{t}^{A}|_{t=\frac{\pi}{\sqrt{1+\epsilon_{0}^{2}}}}.$$

Hence we can conclude that $z^{A}_{t}\geq z^{B}_{t}$ for arbitrary
$t\in [0, \frac{\pi}{|\epsilon_{0}|}]$.

\end{proof}

Let $\gamma_{0}=0$, and we have the following corollary.
\begin{corollary}\label{corollary}
For a two-level quantum system with the initial state
$(x_{0},y_{0},z_{0})=(0,0,1)$ (i.e., $|0\rangle$), the system
evolves to $(x^{A}_{t},y^{A}_{t},z^{A}_{t})$ and
$(x^{B}_{t},y^{B}_{t},z^{B}_{t})$ under the action of
$H^{A}=I_{z}+\epsilon_{0}I_{x}$ (where $\epsilon_{0}$ is a nonzero
constant) and $H^{B}=\epsilon_{0}I_{x}$, respectively. For arbitrary
$t\in [0, \frac{\pi}{|\epsilon_{0}|}]$, $z^{A}_{t}\geq z^{B}_{t}$.
\end{corollary}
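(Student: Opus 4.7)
The plan is to obtain Corollary~\ref{corollary} by directly specializing Lemma~\ref{lemmaB} at the parameter value $\gamma_0 = 0$. First I would verify that this substitution transforms the Hamiltonians in the lemma into those in the corollary: when $\gamma_0 = 0$ we have $\cos\gamma_0 = 1$ and $\sin\gamma_0 = 0$, so $H^A = I_z + \epsilon_0\cos\gamma_0 I_x + \epsilon_0\sin\gamma_0 I_y$ reduces to $I_z + \epsilon_0 I_x$, and $H^B = \epsilon_0\cos\gamma_0 I_x + \epsilon_0\sin\gamma_0 I_y$ reduces to $\epsilon_0 I_x$. The initial condition $(x_0, y_0, z_0) = (0, 0, 1)$ and the time interval $[0, \pi/|\epsilon_0|]$ in the corollary coincide with those in the lemma, so the Bloch trajectories generated by $H^A$ and $H^B$ here are exactly the $\gamma_0 = 0$ specializations of the trajectories analyzed in the lemma.

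Since Lemma~\ref{lemmaB} has already established $z^A_t \geq z^B_t$ for every $t \in [0, \pi/|\epsilon_0|]$ with an arbitrary angle $\gamma_0$, the inequality holds a fortiori at $\gamma_0 = 0$, which is precisely the statement of the corollary. There is effectively no obstacle; the entire argument reduces to the single observation "set $\gamma_0 = 0$ in Lemma~\ref{lemmaB}." The only reason to record this as a separate statement is that the specific Hamiltonians $I_z + \epsilon_0 I_x$ and $\epsilon_0 I_x$ are the exact objects needed in the upcoming proof of Theorem~1, where comparing against the pure $x$-axis drive $\epsilon_0 I_x$ (rather than an arbitrary transverse direction) simplifies the subsequent reductions. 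Consequently my "proof" will consist of one sentence invoking Lemma~\ref{lemmaB} after the substitution is made explicit.
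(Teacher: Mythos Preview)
Your proposal is correct and matches the paper's own treatment exactly: the paper simply writes ``Let $\gamma_{0}=0$, and we have the following corollary'' immediately before stating Corollary~\ref{corollary}, with no further argument. Your one-line specialization of Lemma~\ref{lemmaB} at $\gamma_0=0$ is precisely what is intended.
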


We now present another lemma.

\begin{lemma}\label{lemma2}
For a two-level quantum system with the initial state
$(x_{0},y_{0},z_{0})=(0,0,1)$ (i.e., $|0\rangle$), the system
evolves to $(x^{A}_{t},y^{A}_{t},z^{A}_{t})$ and
$(x^{B}_{t},y^{B}_{t},z^{B}_{t})$ under the action of
$H^{A}=I_{z}+\epsilon(t)I_{x}$ (where $|\epsilon(t)|\leq \epsilon$)
and $H^{B}=\epsilon I_{x}$, respectively. For arbitrary $t\in [0,
\frac{\pi}{\epsilon}]$, $z^{A}_{t}\geq z^{B}_{t}$.
\end{lemma}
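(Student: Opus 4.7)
The plan is to reduce the time-varying perturbation $\epsilon(t)$ in $H^{A}$ to a constant worst-case value $\bar\epsilon\in\{+\epsilon,-\epsilon\}$ using Pontryagin's minimum principle, and then invoke Corollary \ref{corollary} to compare this constant case against the pure rotation $H^{B}=\epsilon I_{x}$. The target chain of inequalities is $z^{A}_{t}\geq z^{\tilde A}_{t}\geq z^{\tilde B}_{t}=z^{B}_{t}$, where $\tilde H^{A}=I_{z}+\bar\epsilon I_{x}$ and $\tilde H^{B}=\bar\epsilon I_{x}$, with the last equality using $\cos(\bar\epsilon t)=\cos(\epsilon t)$.

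The concrete steps I would carry out are as follows. First, frame the problem as minimizing $z_{t_{f}}$ at a fixed final time over admissible controls $|\epsilon(\cdot)|\leq\epsilon$, using the Bloch-vector state equation (\ref{Thm2stateEq}) with initial condition $(0,0,1)$; the minimum of $z^{A}_{t}$ over all admissible $\epsilon(\cdot)$ is exactly what we need to bound below by $z^{B}_{t}=\cos(\epsilon t)$. Second, apply Pontryagin's minimum principle exactly as in the proof of Theorem 2 to obtain the bang-bang form $\epsilon^{*}(t)=-\epsilon\,\mathrm{sgn}(\lambda_{3}(t)y_{t}-\lambda_{2}(t)z_{t})$, and exclude singular arcs via the same auxiliary identities (differentiating $h=\lambda_{3}y-\lambda_{2}z$ along the coupled state--costate dynamics and invoking $x^{2}+y^{2}+z^{2}=1$). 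Third, deduce that the switching function retains constant sign so the worst-case control is the constant $\epsilon^{*}(t)=\bar\epsilon$. Fourth, apply Corollary \ref{corollary} with $\epsilon_{0}=\bar\epsilon$ to obtain $z^{\tilde A}_{t}\geq z^{\tilde B}_{t}=\cos(\bar\epsilon t)$ for $t\in[0,\pi/|\bar\epsilon|]=[0,\pi/\epsilon]$. Fifth, chain the inequalities to conclude $z^{A}_{t}\geq z^{B}_{t}$ on $[0,\pi/\epsilon]$.

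The main obstacle is that the bang-bang sign-preservation established in the proof of Theorem 2 was verified only on the shorter interval $[0,\pi/\sqrt{1+\epsilon^{2}}]$, whereas Lemma \ref{lemma2} demands the comparison on the strictly larger interval $[0,\pi/\epsilon]$. On $(\pi/\sqrt{1+\epsilon^{2}},\pi/\epsilon]$ the switching function displayed in (\ref{singularcondition}) need no longer keep a constant sign under the self-consistent ``assume constant $\bar\epsilon$'' ansatz, so step three requires a more careful sign analysis of $\sin\omega t+\epsilon^{2}\sin\omega t_{f}+\sin\omega(t_{f}-t)$ over the extended range of $t_{f}$. If this extension becomes awkward, I would fall back on a direct Gronwall-type bound that works on the whole interval in one shot: from $\dot z^{A}=\epsilon(t)y^{A}$ and $|y^{A}|\leq\sqrt{1-(z^{A})^{2}}$ one obtains $\dot z^{A}\geq-\epsilon\sqrt{1-(z^{A})^{2}}$, and integrating from $z^{A}(0)=1$ yields $\arcsin z^{A}(t)\geq\pi/2-\epsilon t$, so $z^{A}_{t}\geq\cos(\epsilon t)=z^{B}_{t}$ for all $t\in[0,\pi/\epsilon]$. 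This shortcut recovers Lemma \ref{lemma2} without relying on the Pontryagin machinery beyond $\pi/\sqrt{1+\epsilon^{2}}$, and can be grafted into the main plan to close the remaining sub-interval if needed.
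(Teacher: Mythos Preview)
Your fallback differential-inequality argument is correct and is in fact the cleanest route to the lemma: from $\dot z^{A}=\epsilon(t)y^{A}$ and $|y^{A}|\leq\sqrt{1-(z^{A})^{2}}$ one gets $\dot z^{A}\geq-\epsilon\sqrt{1-(z^{A})^{2}}$, while $z^{B}=\cos(\epsilon t)$ saturates this inequality, so $\arccos z^{A}(t)\leq\epsilon t$ and hence $z^{A}_{t}\geq\cos(\epsilon t)=z^{B}_{t}$ on all of $[0,\pi/\epsilon]$. The only technicality is the non-Lipschitz point $z=1$ at $t=0$, but this is harmless since the bound is trivially satisfied while $z^{A}=1$, and the inequality $\arccos z^{A}(t)\leq\epsilon t$ itself forces $z^{A}>-1$ for $t<\pi/\epsilon$.

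Your primary Pontryagin plan, by contrast, does not extend past $t_{f}=\pi/\sqrt{1+\epsilon^{2}}$: under the constant control $\bar\epsilon$ the trajectory $z^{\tilde A}_{t}=\frac{\epsilon^{2}}{1+\epsilon^{2}}\cos\omega t+\frac{1}{1+\epsilon^{2}}$ turns around and increases, so constant $\bar\epsilon$ is no longer the minimizer of $z_{t_{f}}$ for larger $t_{f}$, and step three (constant sign of the switching function, hence no switching) genuinely fails on the extended interval. You were right to flag this and to supply the alternative.

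The paper's own proof is morally the same differential comparison but executed far less directly. Instead of integrating the scalar inequality, it fixes an arbitrary intermediate level $z_{0}=\cos\theta_{0}$, compares $z^{A}_{\Delta t}$ against $z^{B}_{\Delta t}$ to first (and in a borderline case third) order in $\Delta t$ for all possible $A$-states with that $z$-coordinate, handles the edge cases $\theta_{0}=0$ and $\epsilon+\epsilon(0)\sin\varphi_{0}=0$ separately by invoking Corollary \ref{corollary} and the bang-bang reasoning of Theorem 2, and finally uses a continuity/contradiction argument on $g(t)=z^{A}_{t}-z^{B}_{t}$ to pass from the infinitesimal statement to the global one. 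Your one-line Gronwall bound subsumes all of these cases at once and avoids any appeal to the Pontryagin machinery; it is a genuinely simpler proof than the paper's.
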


\begin{proof}
First, we take an arbitrary evolution state (except $|1\rangle$)
starting from $|0\rangle$ as a new initial state. For
$H^{B}=\epsilon I_x$, the initial state can be represented as
$(x'_{0}, y'_{0}, z_{0})=(0, -\sin\theta_{0}, \cos\theta_{0})$,
where $\theta_{0} \in [0, \pi)$. We have
\begin{equation}\label{eq20}
\left(%
\begin{array}{c}
  \dot{x}_{t}^{B} \\
  \dot{y}_{t}^{B} \\
  \dot{z}_{t}^{B} \\
\end{array}%
\right)
=\left(%
\begin{array}{ccc}
  0 & 0 & 0 \\
  0 & 0  & -\epsilon \\
  0 & \epsilon & 0 \\
\end{array}%
\right) \left(%
\begin{array}{c}
  x_{t}^{B} \\
  y_{t}^{B} \\
  z_{t}^{B} \\
\end{array}%
\right).
\end{equation}
The corresponding solution is as follows:
\begin{equation}\label{bound}
\left(%
\begin{array}{c}
  x_{t}^{B} \\
  y_{t}^{B} \\
  z_{t}^{B} \\
\end{array}%
\right)
=\left(%
\begin{array}{c}
  0 \\
  -z_{0}\sin\epsilon t+y'_{0}\cos\epsilon t \\
  z_{0}\cos\epsilon t+y'_{0}\sin\epsilon t \\
\end{array}%
\right).
\end{equation}

For $H^{A}=I_{z}+\epsilon(t)I_{x}$, we have
\begin{equation}\label{state42}
\left(%
\begin{array}{c}
  \dot{x}_{t}^{A} \\
  \dot{y}_{t}^{A} \\
  \dot{z}_{t}^{A} \\
\end{array}%
\right)
=\left(%
\begin{array}{ccc}
  0 & -1 & 0 \\
  1 & 0  & -\epsilon(t) \\
  0 & \epsilon(t) & 0 \\
\end{array}%
\right) \left(%
\begin{array}{c}
  x_{t}^{A} \\
  y_{t}^{A} \\
  z_{t}^{A} \\
\end{array}%
\right),
\end{equation}
where $(x_{0}^{A}, y_{0}^{A}, z_{0}^{A})=(x_{0}, y_{0},
z_{0})=(\sin\theta_{0}\cos\varphi_{0},
\sin\theta_{0}\sin\varphi_{0}, \cos\theta_{0})$ and $\varphi_{0} \in
[0, 2\pi]$. From (\ref{state42}), we have
\begin{equation}\label{equation43}
\dot{z}_t|_{t=0}=\lim_{\Delta t\rightarrow 0}\frac{z^{A}_{\Delta
t}-z_{0}}{\Delta t}=\epsilon(0)y_{0}= \epsilon(0)
\sin\theta_{0}\sin\varphi_{0}.
\end{equation}
From (\ref{equation43}) and (\ref{bound}), it is easy to obtain the
following relationship:
\begin{equation}
\begin{array}{ll}
z_{\Delta t}^{A}-z_{\Delta t}^{B}& =z_{0}+\epsilon(0)
\sin\theta_{0}\sin\varphi_{0} \Delta t
-z_{0}[1-\frac{\epsilon^{2}(\Delta
t)^{2}}{2}]+\sin\theta_{0}\epsilon \Delta t +O((\Delta t)^{2})\\
& = \Delta t \sin\theta_{0}(\epsilon+\epsilon(0)\sin\varphi_{0})+O((\Delta t)^{2}).\\
\end{array}
\end{equation}

When $\theta_{0} \in [0, \pi)$, $\sin\theta_{0} \geq0$. Moreover, it
is always true that $\epsilon+\epsilon(0)\sin\varphi_{0}\geq 0$. If
$\theta_{0}\neq 0$ and $\epsilon+\epsilon(0)\sin\varphi_{0}\neq 0$,
we have
\begin{equation}
z^{A}_{\Delta t}> z^{B}_{\Delta t} .
\end{equation}

If $\theta_{0}=0$, we have
$(x'_{0},y'_{0},z_{0})=(x_{0},y_{0},z_{0})=(0,0,1)$. According to
Corollary \ref{corollary} and the proof of Theorem 2, we have
\begin{equation}
z^{A}_{\Delta t} \geq z^{B}_{\Delta t} .
\end{equation}

For $\epsilon+\epsilon(0)\sin\varphi_{0}= 0$, it corresponds to two
cases: (a) $\epsilon(0)=\epsilon$ and $\varphi_{0}=\frac{3\pi}{2}$;
(b) $\epsilon(0)=-\epsilon$ and $\varphi_{0}=\frac{\pi}{2}$. In the
following, we consider the case (a) (for the case (b) we have the
same conclusion as the case (a)). Since
$\varphi_{0}=\frac{3\pi}{2}$, $(x_{0}, y_{0}, z_{0})=(0,
-\sin\theta_{0}, \cos\theta_{0})$. Using a similar argument to the
proof of Theorem 2, we know that for $H^{A}=I_{z}+\epsilon(t)I_{z}$
with $(x_{0}, y_{0}, z_{0})=(0, -\sin\theta_{0}, \cos\theta_{0})$,
the optimal control for the performance index $J(\epsilon)=z_{f}$
takes a form of bang-bang control
$\epsilon(t)=\bar{\epsilon}=\epsilon$ or $-\epsilon$. So we only
need to consider a bang-bang strategy.

For such a bang-bang strategy as $H^{A}=I_z+\bar{\epsilon}I_x$, we
have
\begin{equation}
\left(%
\begin{array}{c}
  \dot{x}_{t}^{A} \\
  \dot{y}_{t}^{A} \\
  \dot{z}_{t}^{A} \\
\end{array}%
\right)
=\left(%
\begin{array}{ccc}
  0 & -1 & 0 \\
  1 & 0  & -\bar{\epsilon} \\
  0 & \bar{\epsilon} & 0 \\
\end{array}%
\right) \left(%
\begin{array}{c}
  x_{t}^{A} \\
  y_{t}^{A} \\
  z_{t}^{A} \\
\end{array}%
\right),
\end{equation}
where $(x_{0}^{A}, y_{0}^{A}, z_{0}^{A})=(x_{0}, y_{0}, z_{0})=(0,
-\sin\theta_{0}, \cos\theta_{0})$ and $\varphi_{0} \in [0, 2\pi]$.
We can obtain the corresponding solution as
\begin{equation}\label{oneA}
\left(%
\begin{array}{c}
  x_{t}^{A} \\
  y_{t}^{A} \\
  z_{t}^{A} \\
\end{array}%
\right) =\\
\left(%
\begin{array}{c}
  \frac{-\bar{\epsilon}\cos\theta_{0}}{1+\epsilon^{2}}\cos\omega t+\frac{\sin\theta_{0}}
  {\sqrt{1+\epsilon^{2}}}\sin\omega t+\frac{\bar{\epsilon}\cos\theta_{0}}{1+\epsilon^{2}} \\
  \frac{-\bar{\epsilon}\cos\theta_{0}}{\sqrt{1+\epsilon^{2}}}\sin\omega t-\sin\theta_{0}\cos\omega t \\
  \frac{\epsilon^{2}\cos\theta_{0}}{1+\epsilon^{2}}\cos\omega t-
  \frac{\bar{\epsilon} \sin\theta_{0}}
  {\sqrt{1+\epsilon^{2}}}\sin\omega t+\frac{\cos\theta_{0}}{1+\epsilon^{2}} \\
\end{array}%
\right).\\
\end{equation}

Now, we consider the limit as $\Delta t \rightarrow 0$ and obtain
\begin{equation}\label{comparision1}
\begin{array}{ll}
z^{A}_{\Delta t}-z^{B}_{\Delta t}&
=\frac{\epsilon^{2}\cos\theta_{0}}{1+\epsilon^{2}}
[1-\frac{(1+\epsilon^{2})(\Delta
t)^{2}}{2}]-\frac{\bar{\epsilon}\sin\theta_{0}}{\sqrt{1+\epsilon^{2}}}\sqrt{1+\epsilon^{2}}\Delta
t\\
&\ \ \
+\frac{\cos\theta_{0}}{1+\epsilon^{2}}-\cos\theta_{0}[1-\frac{\epsilon^{2}(\Delta
t)^{2}}{2}]+\sin\theta_{0}\epsilon \Delta t-\frac{\epsilon^{3}}{6}(\Delta t)^{3} \sin\theta_{0}+O((\Delta t)^{4})\\
& =\Delta t \sin\theta_{0}(\epsilon-\bar{\epsilon})+\sin\theta_{0}
(\Delta
t)^{3}(\bar{\epsilon}+\bar{\epsilon}\epsilon^{2}-\epsilon^{3})+O((\Delta
t)^{4})\\
& >0 .\\
\end{array}
\end{equation}
Hence, for arbitrary $z_{0}=\cos\theta_{0}$ ($\theta_{0}\in
[0,\pi)$), we have
\begin{equation}
z^{A}_{\Delta t}\geq z^{B}_{\Delta t} .
\end{equation}

For $t=\frac{\pi}{\epsilon}$, $z_{t}^{B}=-1$. Hence the relationship
$z_{t}^{A}\geq z_{t}^{B}$ is  always true.

We now define $g(t)=z^{A}_{t}- z^{B}_{t}$ and assume that there
exist $t=t_{1}\in [0, \frac{\pi}{\epsilon})$ such that
$z^{A}_{t_{1}}< z^{B}_{t_{1}}$. That is, $g(t_{1})<0$. Since $g(t)$
is continuous in $t$ and $g(0)=0$, there exists a time $t^{*}=\sup
\{t|0 \leq t<t_{1}, g(t)=0\}$ satisfying $g(t)<0$ for $t\in (t^{*},
t_{1}]$. However, we have proven that for any $z_{t}^{A}=z_{t}^{B}$
and $\Delta t \rightarrow 0$, $z_{t+\Delta t}^{A}\geq z_{t+\Delta
t}^{B}$, which contradicts $g(t)<0$ for $t\in (t^{*}, t_{1}]$.
Hence, we have the following relationship for $t\in [0,
\frac{\pi}{\epsilon}]$
\begin{equation}
z^{A}_{t}\geq z^{B}_{t}.
\end{equation}
\end{proof}


Now we can prove Theorem 1 using Lemma \ref{lemmaB} and Lemma
\ref{lemma2}.

\begin{proof}
For $H^{A}=I_{z}+\epsilon_{x}(t) I_{x}+\epsilon_{y}(t) I_{y}$, using
$\dot{\rho}=-i[H, \rho]$ and (\ref{eq4}), we
can obtain the following state equations
\begin{equation}
\left(%
\begin{array}{c}
  \dot{x}_{t}^{A} \\
  \dot{y}_{t}^{A} \\
  \dot{z}_{t}^{A} \\
\end{array}%
\right)
=\left(%
\begin{array}{ccc}
  0 & -1 & \epsilon_{y}(t) \\
  1 & 0  & -\epsilon_{x}(t) \\
  -\epsilon_{y}(t) & \epsilon_{x}(t) & 0 \\
\end{array}%
\right) \left(%
\begin{array}{c}
  x_{t}^{A} \\
  y_{t}^{A} \\
  z_{t}^{A} \\
\end{array}%
\right),
\end{equation}
where $(x_{0}^{A}, y_{0}^{A}, z_{0}^{A})=(0, 0, 1)$.

Define $\epsilon(t)=\sqrt{\epsilon_{x}^{2}(t)+\epsilon_{y}^{2}(t)}$
and $\epsilon_{x}(t)=\epsilon(t)\cos\gamma_{t}$ ,
$\epsilon_{y}(t)=\epsilon(t)\sin\gamma_{t}$. This leads to
the following equation
\begin{equation}\label{twouncertainties}
\left(%
\begin{array}{c}
  \dot{x}_{t}^{A} \\
  \dot{y}_{t}^{A} \\
  \dot{z}_{t}^{A} \\
\end{array}%
\right)=
\left(%
\begin{array}{ccc}
  0 & -1 & \epsilon(t) \sin\gamma_{t} \\
  1 & 0  & -\epsilon(t) \cos\gamma_{t} \\
  -\epsilon(t) \sin\gamma_{t} & \epsilon(t) \cos\gamma_{t} & 0 \\
\end{array}%
\right) \left(%
\begin{array}{c}
  x_{t}^{A} \\
  y_{t}^{A} \\
  z_{t}^{A} \\
\end{array}%
\right)
\end{equation}
where $(x_{0}^{A}, y_{0}^{A},
z_{0}^{A})=(\sin\theta_{0}\cos\varphi_{0},
\sin\theta_{0}\sin\varphi_{0}, \cos\theta_{0})$ and $\varphi_{0} \in
[0, 2\pi]$. From (\ref{twouncertainties}), we have
\begin{equation}\label{differential}
\dot{z}_t|_{t=0}=\lim_{\Delta t\rightarrow 0}\frac{z^{A}_{\Delta
t}-z_{0}}{\Delta t}=\epsilon(0) \cos\gamma_{0}y_{0}-\epsilon(0)
\sin\gamma_{0} x_{0}
\end{equation}
From (\ref{differential}) and (\ref{bound}), it is easy to obtain
the following relationship:
\begin{equation}
\begin{array}{ll}
z_{\Delta t}^{A}-z_{\Delta t}^{B}& =z_{0}+\epsilon(0)
\cos\gamma_{0}y_{0} \Delta t-\epsilon(0)\sin\gamma_{0} x_{0} \Delta
t\\
& \ \ \ -z_{0}[1-\frac{\epsilon^{2}(\Delta
t)^{2}}{2}]+\sin\theta_{0}\epsilon \Delta t +O((\Delta t)^{2})\\
& = \Delta t \sin\theta_{0}(\epsilon+\epsilon(0)\sin(\varphi_{0}-\gamma_{0}))+O((\Delta t)^{2}).\\
\end{array}
\end{equation}

When $\theta_{0} \in [0, \pi)$, $\sin\theta_{0} \geq0$. Moreover, it
is always true that
$\epsilon+\epsilon(0)\sin(\varphi_{0}-\gamma_{0})\geq 0$. If
$\theta_{0}\neq 0$ and
$\epsilon+\epsilon(0)\sin(\varphi_{0}-\gamma_{0})\neq 0$, we have
\begin{equation}
z^{A}_{\Delta t}> z^{B}_{\Delta t} .
\end{equation}

If $\theta_{0}=0$, we have
$(x'_{0},y'_{0},z_{0})=(x_{0},y_{0},z_{0})=(0,0,1)$. According to
Lemma \ref{lemmaB} and Pontryagin's minimum principle, we have
\begin{equation}
z^{A}_{\Delta t} \geq z^{B}_{\Delta t} .
\end{equation}

For $\epsilon+\epsilon(0)\sin(\varphi_{0}-\gamma_{0})= 0$, it must
be true that $\epsilon(0)=\epsilon$ or $-\epsilon$. We consider
$\epsilon(0)=\epsilon$ (for $\epsilon(0)=-\epsilon$ we have the same
conclusion as $\epsilon(0)=\epsilon$). Moreover, we have
$\gamma_{0}=\varphi_{0}+\frac{\pi}{2}$ or
$\gamma_{0}=\varphi_{0}-\frac{3\pi}{2}$.


For $\gamma_{0}=\varphi_{0}+\frac{\pi}{2}$, we first employ the fact
that $z_{t}$ is independent on $\varphi_{0}$ for $H=\epsilon
\sin\varphi_{0} I_{x}-\epsilon \cos\varphi_{0} I_{y}$ and $(x_{0},
y_{0}, z_{0})=(\sin\theta_{0}\cos\varphi_{0},
\sin\theta_{0}\sin\varphi_{0}, \cos\theta_{0})$ since
$z_{t}=\cos(\theta_{0}-\epsilon t)$. Then, it is easy to prove the
relationship $z^{A}_{\Delta t}\geq z^{B}_{\Delta t}$ using a similar
argument to that in the proof of Lemma \ref{lemmaB}. For
$\gamma_{0}=\varphi_{0}-\frac{3\pi}{2}$, we have the same conclusion
as in the case $\gamma_{0}=\varphi_{0}+\frac{\pi}{2}$. Thus, we
obtain
\begin{equation}
z^{A}_{\Delta t}\geq z^{B}_{\Delta t}.
\end{equation}

Now using a similar argument to that in the proof of Lemma
\ref{lemma2}, for arbitrary $t\in [0, \frac{\pi}{\epsilon})$, we
have
\begin{equation}
z^{A}_{t}\geq z^{B}_{t} .
\end{equation}

From (\ref{fidelity}), it is clear that the probabilities of failure
satisfy $p^{A}_{t}=\frac{1-z^{A}_{t}}{2}\leq
p^{B}_{t}=\frac{1-z^{B}_{t}}{2}$. That is, the probability of
failure $p_{t}^{A}$ is not greater than $p^{B}_{t}$ for $t\in [0,
\frac{\pi}{\epsilon})$. Hence, we can design the measurement period
$T$ using the case of $H^{B}$.

Using (\ref{fidelity}) and (\ref{bound}), for $t\in [0,
\frac{\pi}{\epsilon})$, we obtain the probability of failure
\begin{equation}\label{period2}
p_{t}^{B}=\frac{1-\cos{\epsilon}t}{2} .
\end{equation}
Hence, we can design the maximum measurement period as follows
\begin{equation}
T^{(1)}=\frac{\arccos(1-2p_{0})}{\epsilon}.
\end{equation}
\end{proof}

\section{CONCLUSIONS}\label{Sec5}

This paper proposes a variable structure control scheme with sliding
modes for the robust control of two-level quantum systems where an
eigenstate is identified as a sliding mode. We present a design
method for the control laws based on a Lyapunov methodology and
periodic projective measurements to drive and maintain this system's
state in the sliding mode domain. The key task of the control
problem is converted into a problem of designing the Lyapunov
functions and the measurement period. The Lyapunov function can be
constructed to define a control law. By using simulation, we obtain
an open-loop control to drive the controlled quantum system's state
into the sliding mode domain. For different situations of the
uncertainties in the system Hamiltonian, we give two approaches to
design the measurement period, which guarantees control performance
in the presence of the uncertainties. This sliding mode control
scheme provides a robust quantum engineering strategy for controller
design for quantum systems and has potential applications in state
preparation, decoherence control, quantum error correction
\cite{Dong and Petersen 2009NJP}, etc. Future work which can be
carried out in this area is listed as follows. 1) The physical
implementation of the proposed method on specific quantum systems.
For example, spin systems in NMR (see, e.g., \cite{Khaneja et al
2001}, \cite{Lee and Khitrin 2006}) may be a suitable candidate to
test the proposed approach. 2) The extension from two-level systems
to multi-level quantum systems: The basic idea of sliding mode
control can be extended in a straightforward way to multi-level
quantum systems. However, it is much more difficult to obtain an
analytical solution for a multi-level system. In \cite{Dong and
Petersen 2009NJP}, a numerical result has been obtained for a
three-level quantum system to determine the measurement period and
more complex systems are worth exploring by numerical methods. 3)
The extension to dissipative quantum systems governed by the
Lindblad equation \cite{Bonnard and Sugny} or described by a
stochastic differential equation: For such cases, it is necessary to
develop new methods to drive the system into the sliding mode domain
since the Lyapunov-based control approach does not usually work
\cite{Wang et al 2010}. 4) The exploration of practical applications
for the proposed approaches: The sliding mode may correspond to an
eigenstate or a state subspace and the sliding mode design approach
could be used in quantum state preparation and protection of encoded
quantum information in a subspace.

\section*{Appendix: Proof of $T^{(2)}\geq T^{(1)}$}
\begin{proof} Take $p_{0}$ as the variable and define
\begin{equation}\label{F_p0}
F(p_{0})=T^{(2)}-T^{(1)}
\end{equation}
For $p_{0} \in (0, \frac{\epsilon^{2}}{1+\epsilon^{2}})$, we have
the following relationship
\begin{equation}\label{f_p0}
f(p_{0})=F'(p_{0})=\frac{1}{\sqrt{\epsilon^{2}p_{0}-(1+\epsilon^{2})p^{2}_{0}}}
-\frac{1}{\sqrt{\epsilon^{2}p_{0}-\epsilon^{2}p^{2}_{0}}}
\end{equation}

It is clear from (\ref{F_p0}) and (\ref{f_p0}) that $f(p_{0})>0$ for
$p_{0} \in (0, \frac{\epsilon^{2}}{1+\epsilon^{2}})$ and
$F(p_{0}\rightarrow 0^{+})=0$. Hence $F(p_{0})\geq 0$ for $p_{0} \in
(0, \frac{\epsilon^{2}}{1+\epsilon^{2}})$.

When $p_{0}=p'=\frac{\epsilon^{2}}{1+\epsilon^{2}}$,
\begin{equation}
T^{(1)}{(p')}=\frac{\arccos{(\frac{1-\epsilon^{2}}{1+\epsilon^{2}}})}{\epsilon}
,
\end{equation}
\begin{equation}
T^{(2)}{(p')}=\frac{\pi}{\sqrt{1+\epsilon^{2}}} .
\end{equation}

Let $x=\frac{1-\epsilon^{2}}{1+\epsilon^{2}}$ and
\begin{equation}
G(\epsilon)=\frac{\epsilon
\pi}{\sqrt{1+\epsilon^{2}}}-\arccos{(\frac{1-\epsilon^{2}}{1+\epsilon^{2}}}).
\end{equation}
We have
\begin{equation}
\widetilde{G}(x)=\frac{\pi}{\sqrt{2}}\sqrt{1-x}-\arccos{x} .
\end{equation}
For $x\in [-1,\ 1]$, $\widetilde{G}(x)$ is continuous in $x$ and we
also know that $\widetilde{G}(x)=0$ only when $x=\pm 1$. It is easy
to check $\widetilde{G}(x=0)>0$. Hence, we know that for $x\in [-1,\
1]$, $\widetilde{G}(x)\geq 0$. That is, for $\epsilon >0$,
$G(\epsilon)\geq 0$. From the relationship $G(\epsilon)\geq 0$, we
know $T^{(2)}{(p')}\geq T^{(1)}{(p')}$ for $\epsilon >0$.

Hence we concluded that for arbitrary $p_{0} \in (0,
\frac{\epsilon^{2}}{1+\epsilon^{2}}]$, $T^{(2)}\geq T^{(1)}$.
\end{proof}

\section{ACKNOWLEDGMENTS}

The authors would like to thank anonymous reviewers and the
Associate Editor for helpful comments and suggestions.



\end{document}